\theoremstyle{plain}
\newtheorem{thm}{Theorem}[section]
\newtheorem{prop}[thm]{Proposition}
\newtheorem{lem}[thm]{Lemma}
\newtheorem{cor}[thm]{Corollary}
\theoremstyle{definition}
\newtheorem{con}[thm]{Condition}
\theoremstyle{remark}
\title{Limiting absorption principle and radiation condition for repulsive Hamiltonians}
\author{Kyohei Itakura\thanks{Graduate School of Science, Kobe University, Hyogo, Japan} }
\date{}
\begin{document}

\maketitle

\begin{abstract}

For spherically symmetric repulsive Hamiltonians we prove the Besov bound, the radiation condition bounds and the limiting absorption principle.
The Sommerfeld uniqueness result also follows as a corollary of these.
In particular, the Hamiltonians considered in this paper cover the case of inverted harmonic oscillator.
In the proofs of our theorems, we mainly use a commutator argument invented recently by Ito and Skibsted.
This argument is simple and elementary, and dose not employ energy cut-offs or the microlocal analysis.

\end{abstract}


\section{Introduction}

For any fixed $\epsilon\in (0,2]$ we consider the repulsive Schr\"odinger operator
$$
H = -\frac{1}{2}\Delta - |x|^{\epsilon} + q; \quad -\Delta = p_j\delta^{jk}p_k, \ p_j = -i\partial_{x_j},
$$
on the Hilbert space ${\mathcal H} = L^2({\mathbb R}^d)$.
Here $q$ is a real-valued function that may grow slightly slower than $|x|^\epsilon$, 
$\delta^{jk}$ is the Kronecker delta, and we use the Einstein summation convention.
Throughout the paper we will use this convention.
By the Faris-Lavine theorem (see \cite[I\hspace{-1pt}I]{rs}) 
the operator $H$ is essentially self-adjoint on $C^\infty_0(\mathbb R^d)$,
and we denote the self-adjoint extension by the same letter.
For the case $\epsilon=2$ the Hamiltonian $H$ is called the inverted harmonic oscillator.

In this paper we study properties of the resolvent
$$
R(z) = (H-z)^{-1}.
$$
We prove the Besov boundedness, the radiation condition bounds, the limiting absorption principle and the Sommerfeld uniqueness result.
The Besov boundedness yields the absence of singular continuous spectrum of $H$.
In this paper the limiting absorption principle is derived from the Besov boundedness and the radiation condition bounds.
The Sommerfeld uniqueness result characterizes the limiting resolvents by the Helmholtz equation and the radiation condition.
By using the function spaces in \eqref{a-h} below, which are somewhat different from the usual one, we can deal with also the case of inverted harmonic oscillator.

To prove the above results we apply a new commutator argument with some {\it weight inside} invented recently by \cite{is}.
A feature of this argument is a choice of the conjugate operator $A$.
As with \cite{i}, we choose $A$ to be a generator of some radial flow, not of dilations or translations.

Spectral theory for the repulsive Hamiltonians was also studied by \cite{bchm}.
However, to use the Mourre theory they introduced a new conjugate operator by using the pseudo-differential operator.
We do not use the Mourre theory or the pseudo-differential operator.
Due to this, our argument is simpler than theirs.

\subsection{Basic setting}
Choose $\chi \in C^{\infty}({\mathbb R})$ such that
\begin{equation}\label{chi}
\chi (t) =
\begin{cases}
1 & {\rm for} \ t \leq 1, \\
0 & {\rm for} \ t \geq 2,
\end{cases}
\quad \chi' \leq 0,
\end{equation}
and set $r \in C^{\infty}({\mathbb R}^d)$ and the associated differential operator $\nabla^r$ as
\begin{align*}
r(x) &= \chi(|x|)+|x|\left( 1-\chi(|x|) \right), \\
\nabla^r &= (\partial_j r)\delta^{jk}\nabla_k. \notag
\end{align*}
Moreover we introduce the function $f \in C^{\infty}(\mathbb R^d)$ and the associated differential operator $\nabla^f$ as
\begin{align}\label{f}
f(r) &=
\begin{cases}
\left( r^{1- \epsilon/2}-1 \right)/(1-\epsilon/2)+1 & {\rm for} \ 0<\epsilon <2, \\
\log{r}+1 & {\rm for} \ \epsilon = 2,
\end{cases} \\
\nabla^f &= (\partial_j f)\delta^{jk}\nabla_k. \notag
\end{align}
We note that the function $f$ is continuous with regard to $\epsilon$ and the following properties hold:
$$
r \geq 1, \quad f \geq 1, \quad \nabla^f = r^{-\epsilon/2}\nabla^r.
$$
In this paper we use the function $f$ frequently.
This is closely related to the classical orbit.
In particular, it plays an important role for the case $\epsilon=2$.
We are going to see the details of this in Subsection~1.3.

\begin{con} \label{con}
The perturbation $q$ is a real-valued function.
Moreover, there exists a splitting by real-valued functions:
$$q = q_1 + q_2 ; \quad q_1 \in C^1({\mathbb R}^d), \ q_2 \in L^{\infty}({\mathbb R}^d),$$
such that for some $\rho ,C>0$ the following bounds hold globally on ${\mathbb R}^d$:
\begin{equation*}
|q_1| \leq 
\begin{cases}
Cr^{\epsilon}f^{-\rho} & {\rm for}\ 0<\epsilon<2, \\
Cr^2f^{-1-\rho} & {\rm for}\ \epsilon=2,
\end{cases}
\quad \nabla^fq_1 \leq Cf^{-1-\rho}, \quad |q_2|\leq Cf^{-1-\rho}.
\end{equation*}
\end{con}

\vspace{2mm}
We introduce the weighted Hilbert space ${\mathcal H}_s$ for $s \in {\mathbb R}$ by
$${\mathcal H}_s = f^{-s}{\mathcal H}.$$
Note that we introduced the space $\mathcal H_s$ using the function $f$, not $r$.
Here the classical orbit is related, too.
We also denote the locally $L^2$-space by
$${\mathcal H}_{\rm loc} = L^2_{\rm loc}({\mathbb R}^d).$$
We consider $B_R = \{ f<R \}$ and the characteristic functions
\begin{equation*}
F_{\nu} =
F(B_{R_{\nu+1}}\setminus B_{R_{\nu}}),\quad R_{\nu} = 2^{\nu}, \ \nu \geq 0,
\end{equation*}
where $F(\Omega)$ denotes sharp characteristic function of a subset $\Omega \subseteq {\mathbb R}^d$.
Define the spaces $\mathcal B$, ${\mathcal B}^*$ and ${\mathcal B}^*_0$ by
\begin{equation} \label{a-h}
\begin{split}
\mathcal B &= \{ \psi \in {\mathcal H}_{\rm loc} \ | \ \|\psi\|_{\mathcal B} < \infty \}, \quad \ \|\psi\|_{\mathcal B} = \sum_{\nu \geq 0} R_{\nu}^{1/2}\| F_{\nu}\psi\|_{\mathcal H},
\\
{\mathcal B}^* &= \{ \psi \in {\mathcal H}_{\rm loc} \ | \ \|\psi\|_{{\mathcal B}^*} < \infty \}, \quad \ \|\psi\|_{{\mathcal B}^*} = \sup_{\nu \geq 0} R_{\nu}^{-1/2}\| F_{\nu}\psi\|_{\mathcal H},
\\
\mathcal B^*_0&=\{\psi\in \mathcal B^*\ |\ \lim_{\nu \to \infty} R_{\nu}^{-1/2}\| F_{\nu}\psi\|_{\mathcal H}=0\},
\end{split}
\end{equation}
respectively.
We note that ${\mathcal B}_0^*$ coincides with the closure of $C_0^{\infty}({\mathbb R}^d)$ in ${\mathcal B}^*$ and for any $s>1/2$ the following inclusion relations hold:
\begin{align}\label{1720221503}
\mathcal H_s \subsetneq \mathcal B \subsetneq \mathcal H_{1/2} \subsetneq \mathcal H \subsetneq \mathcal H_{-1/2} \subsetneq \mathcal B_0^* \subsetneq \mathcal B^* \subsetneq \mathcal H_{-s}.
\end{align}
In \cite{i} we define the spaces $\mathcal B$ and $\mathcal B^*$ using the function $r$.
However, considering the classical orbit it is natural to define the spaces using the function $f$ as above.

We introduce the conjugate operator $A$ as a maximal differential operator
\begin{equation} \label{A0}
A={\rm Re}\,p^f , \quad p^f=-i\nabla^f,
\end{equation}
with domain 
\begin{align*}
\mathcal D(A)=\{\psi\in\mathcal H\ |\ A\psi\in\mathcal H\}.
\end{align*}
The conjugate operator $A$ is self-adjoint (cf.~\cite{i}) and has the following expressions:
\begin{equation} \label{A}
A = {\rm Re}\,p^f = (p^f)^* + \frac{i}{2}(\Delta f) = p^f - \frac{i}{2}(\Delta f).
\end{equation}

By the definition of $r$, there exist $c>0, r_0 \geq 1$ such that
$$
|\nabla r| \geq c,
$$
on $\{x \in \mathbb R^d \,|\, r(x) > r_0 \}$.
We set
$$
\eta = 1- \chi(r/r_0), \ \ \tilde\eta = \eta|\nabla r|^{-2},
$$
and introduce the tensor $\ell$ as follows.
\begin{equation*} \label{ell}
\ell = \delta - \tilde\eta(\nabla r)\otimes(\nabla r).
\end{equation*}
For notational simplicity, we set
\begin{equation*} \label{h}
h = r^{-\epsilon/2-1}\left( \delta - (\nabla r)\otimes(\nabla r) + 2Cf^{-1-\rho}\delta \right).
\end{equation*}
Here we choose $C>0$ large enough so that
$$
h \geq r^{-\epsilon}f^{-1}\ell + Cr^{-\epsilon}f^{-2-\rho}\delta \geq 0,
$$
as quadratic forms on fibers of the tangent bundle of $\mathbb R^d$.
For any open subset $I \subseteq \mathbb R$ let us denote
$$
I_{\pm}= \{ z=\lambda \pm i \Gamma \in \mathbb C \ | \ \lambda \in I, \ \Gamma \in (0, 1) \},
$$  
respectively.
We also use the notation $\langle T \rangle_{\psi}=\langle \psi, T\psi \rangle$.

\subsection{Results}
\begin{thm} \label{bb}
Suppose Condition~\ref{con} and let $I \subseteq \mathbb R$ be any relatively compact open subset.
Then there exists $C>0$ such that for any $\phi=R(z)\psi$ with $z \in I_{\pm}$ and $\psi \in \mathcal B$
\begin{equation} \label{bbound}
\| \phi \|_{\mathcal B^*} + \|p^f\phi \|_{\mathcal B^*} + \langle p_jh^{jk}p_k \rangle^{1/2}_{\phi} + \| r^{-\epsilon}p_j\delta^{jk}p_k\phi \|_{\mathcal B^*}  \leq C\| \psi \|_{\mathcal B}.
\end{equation}
\end{thm}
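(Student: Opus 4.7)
I would prove all four estimates in~\eqref{bbound} simultaneously by the Ito--Skibsted ``weight inside'' commutator method, applied with the conjugate operator $A = \mathrm{Re}\,p^f$ from~\eqref{A}. Two structural features drive the argument: $A$ is the generator of the classical radial flow in the variable $f$, so that $i[H,A]$ is positive modulo controlled remainders \emph{without} any spectral cutoff around an energy, and the spaces $\mathcal B,\mathcal B^*$ themselves are built from $f$, so that all scales in the commutator calculus and in the norms match. No Mourre theory or pseudodifferential calculus is needed.

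\textbf{Step 1 (commutator positivity).} A direct computation using \eqref{A} and $\nabla^f = r^{-\epsilon/2}\nabla^r$ gives, on the support of $\eta$,
\[
 i[H,A] = 2\,p_j(\partial^j\partial^k f)\,p_k + \epsilon\, r^{\epsilon/2-1} - \nabla^f q + \tfrac{1}{2}\Delta(\Delta f).
\]
The Hessian equals $r^{-\epsilon/2-1}\bigl(\ell^{jk} - \tfrac{\epsilon}{2}(\nabla r)^{\otimes jk}\bigr)$ at leading order, the scalar term $\epsilon r^{\epsilon/2-1}$ is non-negative, $\nabla^f q_1$ is dominated by $Cf^{-1-\rho}$ by Condition~\ref{con}, and $q_2$ together with $\Delta(\Delta f)$ produce a bounded $L^\infty$ remainder. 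Combining the angular part of the Hessian with a small multiple of the positive scalar term reconstructs precisely the weight $h$ appearing in the third quantity of \eqref{bbound}.

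\textbf{Step 2 (scale-indexed observable and dyadic assembly).} For each dyadic scale $R\geq 1$ I introduce a bounded self-adjoint propagation observable of the form
\[
 \Theta_R = \mathrm{Re}\bigl(\Psi(A)\,M_R(f)\bigr),
\]
where $M_R\in C^\infty([0,\infty))$ is a smooth cutoff to $\{f\le 2R\}$ with $M_R'$ localized in the annulus $\{R\le f\le 2R\}$, and $\Psi\in C^\infty(\mathbb R)$ is bounded with $\Psi'\geq 0$ and sign chosen so that $\pm\Gamma\langle\Theta_R\rangle_\phi\le 0$ for $z=\lambda\pm i\Gamma\in I_\pm$. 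The name ``weight inside'' reflects the fact that after expanding $\Psi(A)$, the weight $M_R(f)$ ends up sandwiched between copies of $\Psi'(A)^{1/2}$, producing a genuinely positive quadratic form. Computing $i[H,\Theta_R]$ gives the bulk positive term from Step~1 localized to $\{f\le 2R\}$, plus an interior-boundary term of order $R^{-1}M_R'(f)(p^f)^2$ from $[H,M_R(f)]$ localized in the annulus. The resolvent identity
\[
 \langle i[H,\Theta_R]\rangle_\phi = \pm 2\Gamma\langle\Theta_R\rangle_\phi - 2\,\mathrm{Im}\langle \psi,\Theta_R\phi\rangle,
\]
valid for $\phi=R(z)\psi$, $z\in I_\pm$, then bounds the sum of these positive contributions by $C\|\psi\|_{\mathcal B}\|\phi\|_{\mathcal B^*}$, uniformly in $R$ and $\Gamma\in(0,1)$. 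Taking the dyadic supremum over $R=R_\nu$ as in the definition of $\mathcal B^*$ and absorbing the factor $\|\phi\|_{\mathcal B^*}$ yields the first three terms in \eqref{bbound}. The fourth term $\|r^{-\epsilon}p_j\delta^{jk}p_k\phi\|_{\mathcal B^*}$ follows \emph{a posteriori} from $(H-z)\phi=\psi$: solving for $-\tfrac{1}{2}\Delta\phi$ and using that $r^{-\epsilon}(|x|^\epsilon-q+z)$ is bounded by Condition~\ref{con} while $r^{-\epsilon}\psi\in\mathcal B\subset\mathcal B^*$, the already-established bound on $\|\phi\|_{\mathcal B^*}$ transfers.

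\textbf{Main obstacle.} The delicate case is $\epsilon=2$, where $f=\log r+1$ is only logarithmic in $r$, the radial eigenvalue of the Hessian $\partial^j\partial^k f$ is \emph{negative}, and the radial positivity in Step~1 must come entirely from the scalar potential contribution $\nabla^f(|x|^\epsilon)=2$. Balancing this constant against the $f^{-1-\rho}$ remainders from Condition~\ref{con} and the $O(R^{-1})$ boundary errors, uniformly in $R$ and in $z\in I_\pm$, by a fine-tuned choice of $\Psi$ and $M_R$ (and the auxiliary multiple in front of the scalar term in $h$), is the technical heart of the argument; it is precisely to accommodate this case that $\mathcal B,\mathcal B^*$ have been built from $f$ rather than $r$.
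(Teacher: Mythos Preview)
Your outline captures the spirit of the Ito--Skibsted commutator machinery, but it omits the step that actually closes the argument in this setting. The commutator estimate you describe in Steps~1--2 does not produce an inequality of the clean form ``positive quantities $\leq C\|\psi\|_{\mathcal B}\|\phi\|_{\mathcal B^*}$''. Regardless of how you choose the weight, the computation of $\mathrm{Im}\bigl(A\Theta(H-z)\bigr)$ (the paper uses the linear observable $A$ with a multiplicative weight $\Theta=\Theta_\nu^\delta(f)$ inside, not a function $\Psi(A)$) unavoidably leaves a compactly supported remainder of the type $-C\chi_n^2\Theta$; see Lemma~\ref{bblem} and Proposition~\ref{bbprop}. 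This error arises because the positivity $\Theta'$ on the left behaves like $f^{-1}\Theta$, while the collected remainders from Condition~\ref{con} and from the cutoffs in the definitions of $r$, $\eta$ are only of size $f^{-1-\min\{1,\rho,\epsilon'\}}\Theta$ times a \emph{fixed} constant, so on any bounded set $\{f\leq R_n\}$ the remainder can dominate. Thus after the dyadic supremum you obtain, at best,
\[
c\,\|\phi\|_{\mathcal B^*}^2 \;\leq\; C\|\psi\|_{\mathcal B}\bigl(\|\phi\|_{\mathcal B^*}+\|A\phi\|_{\mathcal B^*}\bigr) + C\|\chi_n\phi\|^2,
\]
and the last term cannot be absorbed.

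The paper resolves this by a contradiction argument that uses Rellich's theorem (Theorem~\ref{rell}): assuming \eqref{besovest} fails, one normalizes $\|\phi_k\|_{\mathcal B^*}=1$, $\|\psi_k\|_{\mathcal B}\to 0$, passes to a subsequence with $z_k\to\lambda\in I$, shows via a compactness argument that $\phi_k\to\phi$ strongly in $\mathcal H_{-s}$, and then uses Proposition~\ref{bbprop} once more (with $\delta=2s-1$) to verify $\phi\in\mathcal B_0^*$. Theorem~\ref{rell} forces $\phi=0$, which contradicts the normalization because the compact error $\|\chi_n\phi_k\|^2\to\|\chi_n\phi\|^2=0$. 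This absence-of-embedded-eigenfunctions input is essential and is missing from your proposal; without it the ``absorption'' in your Step~2 does not go through. Your treatment of the fourth term of \eqref{bbound} via the equation $(H-z)\phi=\psi$ is correct and matches the paper.
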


\begin{cor} \label{sc}
Under Condition~\ref{con}, the operator $H$ has no singular continuous spectrum: $\sigma_{\rm sc}(H)=\emptyset$.
\end{cor}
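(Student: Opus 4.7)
The plan is to deduce absence of singular continuous spectrum from the Besov bound of Theorem~\ref{bb} via a standard Stone formula argument, exploiting the fact that $\mathcal B$ is dense in $\mathcal H$ and that the inclusion $\mathcal B \subset \mathcal H \subset \mathcal B^*$ allows one to pair $\psi \in \mathcal B$ with $R(z)\psi \in \mathcal B^*$.

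First I would fix a relatively compact open interval $I \subset \mathbb R$ and take $\psi \in \mathcal B$. By Theorem~\ref{bb}, for all $z \in I_{\pm}$,
\begin{equation*}
|\langle \psi, R(z)\psi \rangle| \leq \|\psi\|_{\mathcal B}\,\|R(z)\psi\|_{\mathcal B^*} \leq C\|\psi\|_{\mathcal B}^2,
\end{equation*}
uniformly in $z \in I_{\pm}$. In particular $\operatorname{Im}\langle \psi, R(\lambda + \mathrm i\Gamma)\psi \rangle$ is uniformly bounded by $C\|\psi\|_{\mathcal B}^2$ for $\lambda \in I$ and $\Gamma \in (0,1)$.

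Next I would apply Stone's formula: for any open subinterval $(a,b) \subset I$ with $E_H(\{a,b\})\psi = 0$,
\begin{equation*}
\langle \psi, E_H((a,b))\psi \rangle = \lim_{\Gamma \downarrow 0} \frac{1}{\pi}\int_a^b \operatorname{Im}\langle \psi, R(\lambda+\mathrm i\Gamma)\psi \rangle\,\mathrm d\lambda.
\end{equation*}
Combining this with the uniform bound above yields $\langle \psi, E_H((a,b))\psi \rangle \leq C\|\psi\|_{\mathcal B}^2(b-a)$ for all such subintervals, and hence for all Borel subsets of $I$ by a standard approximation. Therefore the spectral measure $\mathrm d\langle \psi, E_H(\cdot)\psi \rangle$ is absolutely continuous with respect to Lebesgue measure on $I$ whenever $\psi \in \mathcal B$.

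Finally, the set of $\psi \in \mathcal H$ whose spectral measure on $I$ is absolutely continuous is a closed subspace of $\mathcal H$ (since $\psi \mapsto \langle \psi, E_H(\cdot)\psi \rangle$ is continuous in total variation on bounded Borel sets). As this subspace contains the dense subspace $\mathcal B$, it is all of $\mathcal H$, giving $\sigma_{\mathrm{sc}}(H) \cap I = \emptyset$. Letting $I$ exhaust $\mathbb R$ completes the proof. There is no real obstacle here beyond verifying the density of $\mathcal B$ (which follows from $C_0^\infty(\mathbb R^d) \subset \mathcal B$, compare~\eqref{1720221503}) and invoking Stone's formula; the substantial work is entirely contained in Theorem~\ref{bb}.
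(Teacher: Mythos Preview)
Your argument is correct and is exactly the standard Stone-formula derivation that the paper tacitly invokes: the paper does not spell out a proof of Corollary~\ref{sc} at all, stating it simply as an immediate consequence of the Besov bound in Theorem~\ref{bb}. The steps you give (uniform bound on $\operatorname{Im}\langle\psi,R(\lambda+i\Gamma)\psi\rangle$ via the $\mathcal B$--$\mathcal B^*$ duality, Stone's formula, absolute continuity of $\mu_\psi$ on $I$ for $\psi\in\mathcal B$, then density of $\mathcal B$ and closedness of the set of vectors with absolutely continuous spectral measure) are precisely the classical ones, and there is nothing to add.
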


To prove Theorem~\ref{bb} we use the absence of $\mathcal B_0^*$-eigenfunctions for $H$.
Since the space $\mathcal B_0^*$ of this paper is somewhat different from the one in \cite{i} for $\epsilon=2$, we state the version of Rellich's theorem using in this paper in Appendix~\ref{appen}.

The absence of eigenvalue for $H$ follows immediately from Theorem~\ref{rell}.
Therefore by combining Corollary~\ref{sc} with it we obtain that the spectrum of $H$ is purely absolutely continuous under Condition~\ref{con}.
The limiting absorption principle does not immediately follow from Besov boundedness \eqref{bbound}.
To show it we impose an additional condition and we establish radiation condition bounds.

\begin{con} \label{con2}
In addition to Condition~\ref{con}, there exist $\tau, C>0$ such that
$$
|\nabla^f q_1| \leq Cf^{-1-\tau}, \quad |\ell^{\bullet k}r^{-\epsilon/2}\nabla_k q_1| \leq Cf^{-1-\tau}.
$$
\end{con}

Now we choose a smooth decreasing function $r_{\lambda} \geq 1$ of $\lambda \in \mathbb R$ such that
\begin{equation*}
\lambda - q_1 + r^{\epsilon} > 1 \ \ \text{for}\ r \geq r_{\lambda},
\end{equation*}
and set asymptotic complex phase $a$: For $z = \lambda \pm i \Gamma \in \mathbb R \cup \mathbb R_{\pm}$
\begin{equation} \label{phasea}
a = a_z = \eta_{\lambda}|\nabla r|r^{-\epsilon/2}\sqrt{2(z-q_1+r^{\epsilon})} \pm i\frac{\epsilon}{2}|\nabla r|^2r^{-\epsilon/2-1},
\end{equation}
respectively, where $\eta_{\lambda}=1-\chi(r/r_{\lambda})$.
Here we choose the branch of square root as ${\rm Re}\,\sqrt{w} >0$ for $w \in \mathbb C \setminus (-\infty, 0]$.
Let
$$
\beta_c = {\rm min}\,\{ \rho, \epsilon', \tau, 1+\epsilon/2 \}, \quad \epsilon' =
\begin{cases}
\epsilon/(1-\epsilon/2) & {\rm for}\ 0<\epsilon<2, \\
2 & {\rm for}\ \epsilon=2.
\end{cases}
$$

\begin{thm} \label{rcb}
Suppose Condition~\ref{con2}, and let $I \subset \mathbb R$ be any relatively compact open subset.
Then for all $\beta \in [0, \beta_c)$ there exists $C>0$ such that for any $\phi = R(z)\psi$ with $\psi \in f^{-\beta}\mathcal B$ and $z \in I_{\pm}$
\begin{equation}
\| f^{\beta}(A \mp a)\phi \|_{\mathcal B^*} + \langle p_if^{2\beta}h^{ij}p_j \rangle_{\phi}^{1/2} \leq C\| f^{\beta}\psi \|_{\mathcal B},
\end{equation}
respectively.
\end{thm}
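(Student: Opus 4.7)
The plan is to prove both bounds by induction on the weight parameter $\beta\in[0,\beta_c)$. The base case $\beta=0$ is essentially contained in Theorem~\ref{bb}: the quadratic form $\langle p_ih^{ij}p_j\rangle_\phi^{1/2}$ appears there directly, and the zeroth order $\|(A\mp a)\phi\|_{\mathcal{B}^*}$ estimate follows from the same commutator identity used below, only without any weight inside. Assuming the conclusion at some $\beta'\geq 0$, I would prove it at $\beta'+\delta$ for a small fixed $\delta>0$; finitely many iterations then cover all $\beta<\beta_c$.

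The key device is the \emph{weight inside} commutator method of \cite{is} applied to the shifted conjugate operator $B_\pm=A\mp a$. Although $a$ is complex, the imaginary part $\mathrm{Im}\,a=\pm(\epsilon/2)|\nabla r|^2 r^{-\epsilon/2-1}$ is a bounded multiplier and is treated as a perturbation. I would take the regularized propagation observable
\begin{equation*}
\Theta_R=\mathrm{Re}\bigl(B_\pm^{\,*}\,\theta_R\, f^{2\beta}\,B_\pm\bigr),\qquad \theta_R=\chi(f/R)^2,
\end{equation*}
and apply the standard identity, valid for $\phi=R(z)\psi$ with $z=\lambda\pm i\Gamma$,
\begin{equation*}
\langle i[H,\Theta_R]\rangle_\phi=2\,\mathrm{Im}\langle\psi,\Theta_R\phi\rangle\pm 2\Gamma\,\langle \Theta_R\rangle_\phi.
\end{equation*}
The term $\pm 2\Gamma\langle\Theta_R\rangle_\phi$ carries a favorable sign (discarded or absorbed according to $\pm$), and the inhomogeneous term is bounded via the $\mathcal B$--$\mathcal B^*$ duality by $\|f^\beta\psi\|_{\mathcal B}\|f^\beta B_\pm\phi\|_{\mathcal B^*}$, half of which is absorbed into the leading contribution.

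The commutator $i[H,\Theta_R]$ splits into three structural pieces. The commutator $i[H,f^{2\beta}]$ sandwiched between $B_\pm^{\,*}$ and $B_\pm$ supplies, after summation over dyadic annuli, the lower bound controlling $\|f^\beta B_\pm\phi\|_{\mathcal B^*}^2$. The commutators $[H,B_\pm]$ and $[H,B_\pm^{\,*}]$, weighted by $f^{2\beta}$, are by the defining property of $a$ in \eqref{phasea} asymptotically proportional to $p_i h^{ij}p_j$ on the support of $\eta_\lambda$, yielding the quadratic form bound on $\langle p_i f^{2\beta}h^{ij}p_j\rangle_\phi$. Remainders fall into three classes: (i) contributions from $\nabla^f q$, handled by the new decay assumptions of Condition~\ref{con2}; (ii) terms supported on $\{r\leq r_\lambda\}$ where $a$ vanishes and Theorem~\ref{bb} suffices; (iii) commutator remainders of size $f^{2\beta-1-\gamma}$ with $\gamma>0$, absorbed by the induction hypothesis at weight $\beta-\delta$. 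Letting $R\to\infty$ removes the cutoff.

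The main obstacle will be verifying the positivity of the leading commutator terms, which rests on delicate cancellations encoded in the choice of $a$: $\mathrm{Re}\,a$ solves the eikonal equation for $H-\lambda$ to leading order in $r$, while $\mathrm{Im}\,a$ precisely cancels the subprincipal term coming from $\Delta f$, and the bookkeeping must track the factor $r^{-\epsilon/2}$ in $\nabla^f=r^{-\epsilon/2}\nabla^r$ throughout. The case $\epsilon=2$ is the most delicate, because $f=\log r+1$ grows so slowly that several error terms become borderline summable; this is exactly what forces the sharp restriction $\beta<\beta_c$ and what makes the $f$-based definition of the spaces $\mathcal B,\mathcal B^*$ in \eqref{a-h} indispensable.
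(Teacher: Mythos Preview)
Your outline is in the right spirit---a weighted commutator estimate built around the operator $A\mp a$---but the execution you sketch diverges from the paper's proof in several respects, and one of these is a genuine concern.

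First, the paper does \emph{not} proceed by induction on $\beta$. It handles every $\beta\in(0,\beta_c)$ in a single step: one fixes $\delta\in(0,\min\{\rho,\epsilon',\tau\}-\beta)$ and applies the key commutator inequality (Lemma~\ref{1708182102}) once. The error terms that appear carry the weight $f^{-1-\min\{2\rho,2\epsilon',2\tau\}+2\delta}\Theta^{2\beta}$, and because $\Theta\le f/R_\nu$ and $\delta+\beta<\min\{\rho,\epsilon',\tau\}$, they are absorbed directly by the Besov bound of Theorem~\ref{bb}, not by any inductive hypothesis. Your class~(iii) remainders, ``absorbed by the induction hypothesis at weight $\beta-\delta$,'' are thus an unnecessary detour.

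Second, and more importantly, the paper does not compute a commutator $i[H,\Theta_R]$ with the second-order propagation observable $\Theta_R=B_\pm^*\theta_Rf^{2\beta}B_\pm$ that you propose. It instead evaluates the quadratic form $\mathrm{Im}\bigl((A-a)^*\Theta^{2\beta}(H-z)\bigr)$ directly. The crucial device that makes this tractable is the factorization of Lemma~\ref{Hz},
\[
H-z=\tfrac12(B+b)\tilde\eta(B-b)+\tfrac12 p_j\ell^{jk}p_k+q_3,\qquad |q_3|\le Cf^{-1-\min\{\rho,\epsilon',\tau\}},
\]
which, since $B-b=r^{\epsilon/2}(A-a)$ up to lower order, immediately produces the positive leading term $(A-a)^*\Theta'\Theta^{2\beta-1}(A-a)$ together with $p_j\Theta^{2\beta}h^{jk}p_k$. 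Your observable would give a third-order commutator $i[H,\Theta_R]$, and your claim that ``$i[H,f^{2\beta}]$ sandwiched between $B_\pm^*$ and $B_\pm$ supplies the lower bound'' is not obviously correct: $i[H,f^{2\beta}]\approx 2\beta f^{2\beta-1}|\nabla f|^2 A$ is a first-order operator, and $B_\pm^*\,(f^{2\beta-1}A)\,B_\pm$ is cubic in first-order operators, not a manifestly nonnegative quadratic form in $B_\pm$. Extracting positivity there would require additional work you have not described. Moreover, the inhomogeneous term $\mathrm{Im}\langle\Theta_R\phi,\psi\rangle=\mathrm{Im}\langle \theta_Rf^{2\beta}B_\pm\phi,B_\pm\psi\rangle$ involves a derivative of $\psi$, whereas the paper's form yields only $\mathrm{Im}\langle(A-a)\phi,\Theta^{2\beta}\psi\rangle$, linear in $\psi$.

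Finally, the paper's weight is the same bounded $\Theta=\Theta_\nu^\delta$ from Section~3, raised to the power $2\beta$; the $\mathcal B^*$-norm is recovered by taking the supremum over the dyadic parameter $\nu$, not by an external cutoff $\theta_R$ followed by $R\to\infty$.
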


By Theorem~\ref{bb} and Theorem~\ref{rcb} we obtain the limiting absorption principle.

\begin{cor} \label{lap}
Suppose Condition~\ref{con2} and let $I \subseteq \mathbb R$ be any relatively compact open subset.
For any $s>1/2$ and $\omega \in (0, {\rm min}\left\{(2s-1)/(2s+1), \beta_c/(\beta_c+1)\right\})$ there exists $C>0$ such that for any $z, z' \in I_+$ or $z, z' \in I_-$
\begin{equation} \label{Hc}
\begin{split}
\| R(z) - R(z') \|_{\mathcal B(\mathcal H_s, \mathcal H_{-s})} &\leq C|z-z'|^{\omega}, \\
\| r^{-\epsilon/2}pR(z) - r^{-\epsilon/2}pR(z') \|_{\mathcal B(\mathcal H_s, \mathcal H_{-s})} &\leq C|z-z'|^{\omega}.
\end{split}
\end{equation}
In particular, the operators $R(z)$ and $r^{-\epsilon/2}pR(z)$ attain uniform limits as $I_{\pm} \ni z \to \lambda \in I$ in the norm topology of $\mathcal B(\mathcal H_s, \mathcal H_{-s})$, say denoted by
\begin{equation} \label{Rlim}
\begin{split}
R(\lambda \pm i0) &= \lim_{I_{\pm} \ni z \to \lambda} R(z), \\
r^{-\epsilon/2}pR(\lambda \pm i0) &= \lim_{I_{\pm} \ni z \to \lambda} r^{-\epsilon/2}pR(z),
\end{split}
\end{equation}
respectively.
These limits $R(\lambda \pm i0)$ and $r^{-\epsilon/2}pR(\lambda \pm i0)$ belong to $\mathcal B(\mathcal B, \mathcal B^*)$.
\end{cor}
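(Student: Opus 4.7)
The plan is to deduce the Hölder continuity from the uniform Besov bound of Theorem~\ref{bb} and the radiation condition bounds of Theorem~\ref{rcb} via a two-scale cutoff and scale-balancing argument in the spirit of Ito--Skibsted. First, by Theorem~\ref{bb} and the inclusions in \eqref{1720221503}, the families $\{R(z)\}_{z \in I_{\pm}}$ and $\{r^{-\epsilon/2}pR(z)\}_{z \in I_{\pm}}$ are uniformly bounded in $\mathcal B(\mathcal H_s,\mathcal H_{-s})$ for $s > 1/2$; in particular $\|R(z) - R(z')\|_{\mathcal B(\mathcal H_s,\mathcal H_{-s})} \leq 2C$ is a trivial starting point to be upgraded.

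For the first estimate in \eqref{Hc}, I fix $\psi \in \mathcal H_s$ and set $\phi = (R(z) - R(z'))\psi$. I decompose the $\mathcal H_{-s}$-norm using the cutoff $\chi_N = \chi(f/N)$. A direct telescoping through the partition $\{F_\nu\}$ together with $2s > 1$ yields the exterior bound
\[
\|(1-\chi_N)\phi\|_{\mathcal H_{-s}} \leq CN^{1/2-s}\|\phi\|_{\mathcal B^*} \leq CN^{1/2-s}\|\psi\|_{\mathcal H_s}.
\]
For the interior I use the resolvent identity $\phi = (z-z')R(z)R(z')\psi$ and insert a second cutoff $\chi_{2N}$ between the two resolvents. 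The compactly supported piece $\chi_N R(z)\chi_{2N}R(z')\psi$ is controlled by iterating Theorem~\ref{bb}: $\|\chi_{2N}R(z')\psi\|_{\mathcal B} \leq CN\|\psi\|_{\mathcal H_s}$ by a partition computation, and then Theorem~\ref{bb} controls $R(z)\chi_{2N}R(z')\psi$ in $\mathcal B^*$. The far-field piece is reduced, via the resolvent commutator identity $\chi_N R(z) = R(z)\chi_N - R(z)[\chi_N,H]R(z)$ together with the support vanishing $\chi_N(1-\chi_{2N}) = 0$, to a commutator term localized in a thin shell where the gradient bound in \eqref{bbound} applies. Combining these and optimizing the two-scale decomposition produces an interior bound of the form $\|\chi_N\phi\|_{\mathcal H_{-s}} \leq C|z-z'|N\|\psi\|_{\mathcal H_s}$, and balancing against the exterior bound over $N > 0$ gives the Hölder exponent $(2s-1)/(2s+1)$. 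The bound for $r^{-\epsilon/2}pR(z)$ is proved by the same template using the gradient and second-derivative bounds in \eqref{bbound}.

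The alternative exponent $\beta_c/(\beta_c+1)$ enters through Theorem~\ref{rcb}: for $\psi \in f^{-\beta}\mathcal B$ with $\beta < \beta_c$, the radiation condition $\|f^\beta(A \mp a)R(z)\psi\|_{\mathcal B^*} \leq C\|f^\beta\psi\|_{\mathcal B}$ sharpens the far-field decay of $\phi$ by a factor $f^{-\beta}$; rebalancing over $N$ then produces the exponent $\beta_c/(\beta_c+1)$, and the admissible Hölder range is the minimum of the two. Once both Hölder estimates are established, $\{R(z)\}$ and $\{r^{-\epsilon/2}pR(z)\}$ are norm-Cauchy in $\mathcal B(\mathcal H_s,\mathcal H_{-s})$ as $I_{\pm} \ni z \to \lambda \in I$, giving the limits \eqref{Rlim}; the extension of the limits to $\mathcal B(\mathcal B,\mathcal B^*)$ follows by a density argument based on the uniform Besov bound from Theorem~\ref{bb}.

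I expect the main difficulty to lie in achieving the sharp linear-in-$N$ interior bound: a naive iteration of Theorem~\ref{bb} overshoots by a half-power of $N$, and recovering the correct scaling requires the simultaneous use of the support vanishing $\chi_N(1-\chi_{2N}) = 0$, the first-order shell localization of $[\chi_N,H]$, and the gradient Besov bound from Theorem~\ref{bb}. A similarly delicate application of Theorem~\ref{rcb} will be needed to extract the $\beta_c/(\beta_c+1)$ branch of the exponent.
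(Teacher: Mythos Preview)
Your plan contains a structural misconception: you treat the two exponents $(2s-1)/(2s+1)$ and $\beta_c/(\beta_c+1)$ as arising from two \emph{separate} arguments, the first from the Besov bound (Theorem~\ref{bb}) alone and the second from the radiation condition (Theorem~\ref{rcb}). This is not how the proof works, and in fact the first branch as you describe it does not go through. The radiation condition is indispensable for obtaining \emph{any} positive H\"older exponent; the two constraints $\beta<s-\tfrac12$ (needed so that $\mathcal H_s\subset f^{-\beta}\mathcal B$) and $\beta<\beta_c$ (needed for Theorem~\ref{rcb} to apply) enter simultaneously, and the minimum of the two resulting exponents is what appears in the statement.

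Concretely, your interior bound $\|\chi_N\phi\|_{\mathcal H_{-s}}\le C|z-z'|N\|\psi\|_{\mathcal H_s}$ cannot be obtained from Theorem~\ref{bb} alone. After your commutator reduction you are left with terms of the form $(z-z')R(z)[\chi_N,H]R(z)(1-\chi_{2N})R(z')\psi$, i.e.\ an \emph{iterated} resolvent. The Besov bound maps $\mathcal B\to\mathcal B^*$ and does not compose: $(1-\chi_{2N})R(z')\psi$ lies only in $\mathcal B^*$, so Theorem~\ref{bb} gives no uniform control of $R(z)$ applied to it. The shell localization of $[\chi_N,H]$ and the gradient bound in \eqref{bbound} do not rescue this, because the problematic iteration sits \emph{inside} the commutator, not outside it. Repeating the commutator trick just piles on more resolvents.

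The paper's route avoids the iteration entirely. One writes
\[
R(z)-R(z')=\bigl(\chi_mR(z)\chi_m-\chi_mR(z')\chi_m\bigr)+\text{tails},
\]
bounds the tails by $CR_m^{s'-s}$ via Theorem~\ref{bb}, and for the main term inserts a further cutoff $\chi_n$ ($n>m$) to get
\[
\chi_mR(z)\chi_m-\chi_mR(z')\chi_m=\chi_mR(z)\bigl\{(z-z')\chi_n+i\,\mathrm{Re}(\chi_n'A)\bigr\}R(z')\chi_m.
\]
Here the resolvents are sandwiched by $\chi_m$ on \emph{both} sides, so each sees compactly supported data. The commutator piece $\mathrm{Re}(\chi_n'A)$ is then split as $\chi_n'(A-a_{z'})$ plus $(A-\bar a_z)^*\chi_n'$ plus an $O(|z-z'|)$ remainder; Theorem~\ref{rcb} gives the $(A-a)$ factors an extra $f^{-\beta}$ decay which, on $\mathrm{supp}\,\chi_n'$, becomes a factor $R_n^{-\beta}$. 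Optimizing over $m,n$ then yields the exponent $(s-s')/(s-s'+1)$ with $s-s'<\beta<\min\{\beta_c,s-\tfrac12\}$, which is exactly the range in the statement. Without this use of Theorem~\ref{rcb}, the commutator term carries no smallness and the balancing collapses.
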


Combining Theorem~\ref{rcb} and Corollary~\ref{lap} we obtain the radiation condition bounds for real spectral parameters.

\begin{cor} \label{rcb2}
Suppose Condition~\ref{con2}, and let $I \subset \mathbb R$ be any relatively compact open subset.
Then for all $\beta \in [0, \beta_c)$ there exists $C>0$ such that for any $\phi = R(\lambda \pm i0)\psi$ with $\psi \in f^{-\beta}\mathcal B$ and $z \in I_{\pm}$
\begin{equation}
\| f^{\beta}(A \mp a)\phi \|_{\mathcal B^*} + \langle p_if^{2\beta}h^{ij}p_j \rangle_{\phi}^{1/2} \leq C\| f^{\beta}\psi \|_{\mathcal B},
\end{equation}
respectively.
\end{cor}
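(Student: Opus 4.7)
The plan is to deduce the real-parameter estimate from Theorem~\ref{rcb} by passing to the limit $z \to \lambda \pm i0$, combining the norm convergence supplied by Corollary~\ref{lap} with the weak-$*$ lower semicontinuity of $\|\cdot\|_{\mathcal{B}^*}$ and the weak lower semicontinuity of the $\mathcal{H}$-norm. No new commutator computation is needed.

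Fix $\psi \in f^{-\beta}\mathcal{B}$ and a sequence $z_n \in I_{\pm}$ with $z_n \to \lambda \in I$. Set $\phi_n = R(z_n)\psi$ and $\phi = R(\lambda \pm i0)\psi$. Theorem~\ref{rcb} furnishes the uniform bound
$$
\|f^{\beta}(A \mp a_{z_n})\phi_n\|_{\mathcal{B}^*} + \langle p_i f^{2\beta} h^{ij} p_j \rangle_{\phi_n}^{1/2} \leq C \|f^{\beta}\psi\|_{\mathcal{B}},
$$
with $C$ independent of $n$. Inspection of the defining expression \eqref{phasea} shows that $a_{z_n} \to a_\lambda$ pointwise and indeed locally uniformly on $\{r \geq r_\lambda\}$, while the family $\{a_{z_n}\}$ is locally uniformly bounded.

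Corollary~\ref{lap} yields norm convergence $\phi_n \to \phi$ and $r^{-\epsilon/2} p \phi_n \to r^{-\epsilon/2} p \phi$ in $\mathcal{B}(\mathcal{H}_s, \mathcal{H}_{-s})$ for suitable $s > 1/2$; via the inclusions \eqref{1720221503} this applies to our $\psi \in f^{-\beta}\mathcal{B}$. Since $A = p^f - (i/2)(\Delta f)$ with $p^f = r^{-\epsilon/2} \nabla^r$ a bounded multiple of $r^{-\epsilon/2} p$, it follows that $f^{\beta}(A \mp a_{z_n})\phi_n$ converges to $f^{\beta}(A \mp a_\lambda)\phi$ in $\mathcal{D}'(\mathbb{R}^d)$. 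Now the sequence is uniformly bounded in $\mathcal{B}^*$, which is the Banach dual of $\mathcal{B}$, so Banach--Alaoglu extracts a weak-$*$ convergent subsequence; uniqueness of distributional limits identifies the weak-$*$ limit as $f^{\beta}(A \mp a_\lambda)\phi$, and weak-$*$ lower semicontinuity of $\|\cdot\|_{\mathcal{B}^*}$ delivers
$$
\|f^{\beta}(A \mp a_\lambda)\phi\|_{\mathcal{B}^*} \leq \liminf_{n \to \infty} \|f^{\beta}(A \mp a_{z_n})\phi_n\|_{\mathcal{B}^*} \leq C\|f^{\beta}\psi\|_{\mathcal{B}}.
$$

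The form term is handled in parallel: writing $\langle p_i f^{2\beta} h^{ij} p_j \rangle_{\phi_n} = \|f^{\beta} h^{1/2} p \phi_n\|_{\mathcal{H}}^2$ and noting that $h^{1/2}$ carries the factor $r^{-\epsilon/2-1/2}$ times bounded weights, the convergence of $r^{-\epsilon/2} p \phi_n$ forces $f^{\beta} h^{1/2} p \phi_n \rightharpoonup f^{\beta} h^{1/2} p \phi$ weakly in $\mathcal{H}$, and weak lower semicontinuity of the $L^2$-norm closes the bound. The main obstacle is precisely this identification of the weak-$*$/weak limits through the unbounded weight $f^{\beta}$ and the $z$-dependent phase $a_{z_n}$; once one verifies that the limiting objects genuinely coincide with the expected expressions built from $\phi$ and $a_\lambda$, the rest of the argument is the routine Banach--Alaoglu plus lower-semicontinuity wrap-up.
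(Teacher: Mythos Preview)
Your argument is correct and arrives at the same estimate, but the mechanism differs from the paper's. The paper avoids weak-$*$ compactness altogether: it first localizes with the cutoffs $\chi_n$, invokes the elementary identity $\|\psi\|_{\mathcal B^*}=\sup_{n\geq 0}\|\chi_n\psi\|_{\mathcal B^*}$, and for each fixed $n$ passes to the limit $\Gamma\to 0$ by \emph{strong} convergence (on the bounded $f$-region $\{\chi_n\neq 0\}$ the $\mathcal B^*$-norm is comparable to an $\mathcal H_{-s}$-norm, so Corollary~\ref{lap} gives norm convergence of $\chi_n f^\beta(A-a_z)R(z)\psi$), then takes the supremum over $n$; the form term is handled by Lebesgue's monotone convergence theorem. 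You instead use the identification of $\mathcal B^*$ as the dual of $\mathcal B$, Banach--Alaoglu, and weak-$*$ lower semicontinuity. Both routes are legitimate; the paper's is more self-contained because it does not rely on the (standard but not proved in the paper) duality $\mathcal B^*=(\mathcal B)'$, while yours is more streamlined once that duality is granted. One small gap to patch: for $\beta=0$ the inclusion $\psi\in f^{-\beta}\mathcal B=\mathcal B$ only yields $\psi\in\mathcal H_{1/2}$, not $\psi\in\mathcal H_s$ for some $s>1/2$, so Corollary~\ref{lap} does not apply directly; you need the same density step (first $\psi\in C_0^\infty$, then extend) that the paper employs.
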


Finally, we obtain the Sommerfeld uniqueness result.
\begin{cor} \label{Sur}
Suppose Condition~\ref{con2}, and let $\lambda \in \mathbb R$, $\phi \in \mathcal H_{\rm loc}$ and $\psi \in f^{-\beta}\mathcal B$ with $\beta \in [0, \beta_c)$.
Then $\phi = R(\lambda \pm i0)\psi$ holds if and only if both of the following conditions hold:
\vspace*{-2mm}
\begin{itemize}
  \setlength{\itemsep}{-1mm}
  \item[(i)] $(H-\lambda)\phi = \psi$ \ in the distributional sense.
  \item[(ii)] $\phi \in f^{\beta}\mathcal B^*$ and $(A \mp a)\phi \in f^{-\beta}\mathcal B_0^*$.
\end{itemize}
\end{cor}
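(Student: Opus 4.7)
The plan is to prove the two implications separately. The forward direction is a short consequence of the preceding results. Assume $\phi = R(\lambda \pm i0)\psi$. Condition (i) is obtained by passing to the distributional limit in $(H-z)R(z)\psi = \psi$ as $I_{\pm} \ni z \to \lambda$, using Corollary~\ref{lap}. The inclusion $\phi \in f^{\beta}\mathcal{B}^*$ follows from Corollary~\ref{lap} together with $\mathcal{B}^* \subseteq f^{\beta}\mathcal{B}^*$. For $(A \mp a)\phi \in f^{-\beta}\mathcal{B}_0^*$ I would run a density argument: fix $\beta < \beta' < \beta_c$ and approximate $\psi$ in $f^{-\beta}\mathcal{B}$ by a sequence $\psi_n \in f^{-\beta'}\mathcal{B}$ (e.g., spatial truncations of $\psi$). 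Corollary~\ref{rcb2} at level $\beta'$ places $(A \mp a)R(\lambda \pm i0)\psi_n$ in $f^{-\beta'}\mathcal{B}^* \subseteq f^{-\beta}\mathcal{B}_0^*$ (the latter inclusion because $\beta<\beta'$), and the same corollary at level $\beta$ propagates the convergence to the limit in $f^{-\beta}\mathcal{B}^*$; since $f^{-\beta}\mathcal{B}_0^*$ is closed in $f^{-\beta}\mathcal{B}^*$, the limit inherits the property.

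For the backward direction, assume (i) and (ii), set $\phi_0 = R(\lambda \pm i0)\psi$, and put $\tilde\phi = \phi - \phi_0$. By the forward direction, $\tilde\phi$ solves the homogeneous equation $(H-\lambda)\tilde\phi = 0$, lies in $f^{\beta}\mathcal{B}^*$, and satisfies $(A \mp a)\tilde\phi \in f^{-\beta}\mathcal{B}_0^*$. I would deduce $\tilde\phi = 0$ by first upgrading the regularity to $\tilde\phi \in \mathcal{B}_0^*$ and then invoking Theorem~\ref{rell} of Appendix~\ref{appen}, which excludes non-trivial $\mathcal{B}_0^*$-solutions of $(H-\lambda)u = 0$.

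The upgrade from $f^{\beta}\mathcal{B}^*$ to $\mathcal{B}_0^*$ is the heart of the proof and I would carry it out by a commutator/positivity argument. With a smooth cutoff $\chi_N$ localizing to $\{f \geq R_N\}$, expand the non-negative quadratic form $\langle \chi_N^2 f^{2\beta}(A \mp a)^*(A \mp a)\rangle_{\tilde\phi}$, use $(H-\lambda)\tilde\phi = 0$ to recast the $A^2$-contributions through $p_j\delta^{jk}p_k\tilde\phi = 2(\lambda - q + r^{\epsilon})\tilde\phi$, and extract a positive principal contribution from $\pm \operatorname{Im} a = \tfrac{\epsilon}{2}|\nabla r|^2 r^{-\epsilon/2-1}>0$. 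Reusing the $[H,A]$ computations already developed for Theorems~\ref{bb} and~\ref{rcb}, one should obtain a bound of the schematic form
\[
R_{\nu}^{-1}\|F_{\nu}\tilde\phi\|_{\mathcal{H}}^2 \leq C \sum_{|\mu - \nu| \leq 1} R_{\mu}^{-1}\|F_{\mu}f^{\beta}(A \mp a)\tilde\phi\|_{\mathcal{H}}^2 + \varepsilon_\nu,
\]
with $\varepsilon_\nu \to 0$ as $\nu \to \infty$. The right-hand side vanishes in the limit by $f^{\beta}(A \mp a)\tilde\phi \in \mathcal{B}_0^*$, whence $\tilde\phi \in \mathcal{B}_0^*$ and Theorem~\ref{rell} forces $\tilde\phi = 0$.

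The principal obstacle is the commutator bookkeeping in this upgrade: derivatives of the cutoff $\chi_N$, of the weight $f^{2\beta}$, and of the complex phase $a$ all generate error terms that must be absorbed or shown to vanish as $N \to \infty$. The argument must accommodate both the power weight for $0<\epsilon<2$ and the logarithmic weight $f = \log r + 1$ at $\epsilon = 2$ on the same footing. Separating $\operatorname{Re} a$ and $\operatorname{Im} a$ carefully is needed to preserve the positivity of the principal term, and Condition~\ref{con2} is used precisely to dispose of $\nabla^f q_1$ and the tangential derivatives $\ell^{\bullet k}r^{-\epsilon/2}\nabla_k q_1$ that appear when $[H,A]$ is expanded; the bounded perturbation $q_2$ is absorbed via its $f^{-1-\rho}$ decay.
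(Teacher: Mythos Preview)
Your overall architecture matches the paper: the forward direction via Corollaries~\ref{lap} and~\ref{rcb2}, and the backward direction by setting $\tilde\phi=\phi-R(\lambda\pm i0)\psi$, upgrading to $\tilde\phi\in\mathcal B_0^*$, and invoking Theorem~\ref{rell}. Your density argument for $(A\mp a)\phi\in f^{-\beta}\mathcal B_0^*$ in the forward direction is a correct way to supply a detail the paper leaves implicit.

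The gap is in your proposed mechanism for the upgrade $\tilde\phi\in\mathcal B_0^*$. Expanding $\langle\chi_N^2 f^{2\beta}(A\mp a)^*(A\mp a)\rangle_{\tilde\phi}$ and looking to $\pm\operatorname{Im}a=\tfrac{\epsilon}{2}|\nabla r|^2 r^{-\epsilon/2-1}$ for positivity cannot work: $\operatorname{Im}a$ \emph{decays} like $r^{-\epsilon/2-1}$, whereas membership in $\mathcal B_0^*$ requires $R_\nu^{-1}\|F_\nu\tilde\phi\|^2\to 0$, i.e.\ control with weight $f^{-1}\sim r^{\epsilon/2-1}$. The two differ by a full factor $r^{\epsilon}$, so no amount of bookkeeping in $(A-a)^*(A-a)=(A-\operatorname{Re}a)^2+(\operatorname{Im}a)^2-i[A,\operatorname{Im}a]$ produces a lower bound of the right strength on $|\tilde\phi|^2$. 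The positive quantity that actually does the job is $\operatorname{Re}a$, which is bounded below (it tends to $\sqrt2$ at infinity).

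The paper exploits this through a one-line virial identity rather than an $(A-a)^*(A-a)$ expansion. From $i[H,\chi_\nu]=\operatorname{Re}(\chi_\nu'A)$ one writes
\[
2\operatorname{Im}\bigl\langle\chi_\nu(H-\lambda)\bigr\rangle_{\tilde\phi}
=\bigl\langle(\operatorname{Re}a)\chi_\nu'\bigr\rangle_{\tilde\phi}
+\operatorname{Re}\bigl\langle\chi_\nu'(A-a)\bigr\rangle_{\tilde\phi}.
\]
The left side vanishes by (i), so $0\le\langle(\operatorname{Re}a)\bar\chi_\nu'\rangle_{\tilde\phi}=\operatorname{Re}\langle\chi_\nu'(A-a)\rangle_{\tilde\phi}$. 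Since $\operatorname{Re}a\ge c>0$ on $\operatorname{supp}\bar\chi_\nu'$ (for large $\nu$) and $\bar\chi_\nu'\sim R_\nu^{-1}$ there, the left side dominates $cR_\nu^{-1}\|F_\nu\tilde\phi\|^2$; the right side is bounded by $\|f^{-\beta}\chi_\nu'\tilde\phi\|\,\|F_\nu f^{\beta}(A-a)\tilde\phi\|$, which is $o(1)$ by (ii). This yields exactly your schematic inequality, but via $\operatorname{Re}a$ and the commutator $[H,\chi_\nu]$ rather than the expansion you propose. Replacing your upgrade step with this virial identity closes the argument.
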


As is seen in Appendix~\ref{appen}, there is no generalized eigenfunction in $\mathcal B_0^*$.
We constructed a $\mathcal B^*$-eigenfunction in \cite{i} (see also Appendix~\ref{appen}).
Therefore in the sense that the inclusion relations \eqref{1720221503} hold, the space $\mathcal B^*$ is the minimal space where a generalized eigenfunction exists.
Hence Theorem~\ref{bb} asserts the boundedness of $R(z)$ between natural and optimal spaces.
As far as the author knows, there seem to be no literature on the Besov boundedness for repulsive Hamiltonians so far, and our theorem is new.
In particular, by setting the spaces $\mathcal B$ and $\mathcal B^*$ using the function $f$ of \eqref{f}, even for the case $\epsilon=2$ we obtain the results.
In fact if we define the spaces $\mathcal B$ and $\mathcal B^*$ using the function $r$,
the proof of Theorem~\ref{rcb} is not completed.

To prove the theorems and the corollaries we apply a new commutator argument with some weight inside from \cite{is}.
In \cite{is}, they consider only potentials decaying at infinity.
In order to deal with the repulsive potentials that diverge to $-\infty$ at infinity we need to choose the appropriate conjugate operator $A$ as \eqref{A0}.

The limiting absorption principle for repulsive Hamiltonians was studied also by \cite{bchm}.
However, they did not prove the Besov boundedness.
Moreover, as for the decay rate of perturbation at infinity, our assumptions are considerably weaker and includes their setting.
In this sense, our results are stronger than theirs.

In case $\epsilon=0$, there has been an extensive amount of literature on spectral theory (e.g. \cite{a, fh, fhh2o, ho, ij, is, iso}).
As for the case $\epsilon=2$, Ishida studied inverse scattering problem in \cite{ishi} and borderline of the short-range condition in \cite{ishi2}.
Moreover Finster and Isidro discussed the $L^p$-spectrum in \cite{fi}.
Skibsted dealt with the Besov bound and the limiting absorption principle for attractive Hamiltonians in \cite{ski}, whereas we considered the case of repulsive Hamiltonians.
We also mention recent works related to the repulsive potentials.
Josef studied in \cite{j} the properties of spectrum of two-dimensional Pauli operator with repulsive potential.
Lakaev studied in \cite{l1,l2} eigenvalue problem for discrete Schr\"odinger operator with repulsive potential on the two-dimensional lattice $\mathbb Z^2$.

In Section~2 we introduce a commutator with weight inside and discuss its properties.
In Section~3 we prove Theorem~\ref{bb} by using a commutator estimate and contradiction.
In Section~4 by using Theorem~\ref{bb} we prove Theorem~\ref{rcb} and Corollaries~\ref{lap}-\ref{Sur}.
In the proofs of the these results commutator estimates play major roles.


\subsection{Classical orbit}

In this subsection we consider the classical orbit on the Hamiltonian
$$
H = \frac{1}{2}p^2 - |x|^{\epsilon}.
$$
The Hamilton equation is given by
\begin{equation*}
\dot{x}(t) = p(t), \quad \dot{p}(t) = -\epsilon|x(t)|^{\epsilon-2}x(t).
\end{equation*}
This yields the following equation:
\begin{equation}\label{ddot}
\ddot{x}(t) = \epsilon|x(t)|^{\epsilon-2}x(t).
\end{equation}
As for the case $\epsilon=2$ we can compute explicitly:
\begin{equation*}
x(t) = \frac{1}{2}\left( x(0) + \frac{1}{\sqrt 2}\dot{x}(0) \right)e^{\sqrt 2t} + \frac{1}{2}\left( x(0) - \frac{1}{\sqrt 2}\dot{x}(0) \right)e^{-\sqrt 2t}.
\end{equation*}
Thus, in general, $|x(t)|$ grows exponentially as $t \to \infty$.
On the other hand for the case $0<\epsilon<2$, $|x(t)|$ grows in the order of $t^{1/(1-\epsilon/2)}$ in general.
In fact, we set $x(t) = t^{1/(1-\epsilon/2)}y$ for $y \in \mathbb R^d$ with $|y|=\left( 2^{-1}(2-\epsilon)^2 \right)^{1/(2-\epsilon)}$, and then the function $x(t)$ satisfies \eqref{ddot}.
By these observations if we define the new position function
\begin{equation*}
y(t) =
\begin{cases}
|x(t)|^{1-\epsilon/2}\left( x(t)/|x(t)| \right) & \text{for}\ \, 0<\epsilon<2, \\
\log{|x(t)|}\left( x(t)/|x(t)| \right) & \text{for}\ \, \epsilon=2,
\end{cases}
\end{equation*}
we have $|y(t)|=\mathcal O(t)$ as $t \to \infty$ similarly to the case $\epsilon=0$.
Hence it is natural to define the spaces $\mathcal B$ and $\mathcal B^*$ using the function $f$ rather than $r$.


\section{Preliminaries}

In this section we are going to prepare some lemmas and properties to prove the results that are stated in Section~1.
For simplicity, we omit the proofs of these (cf.~\cite{i, sig}).
\begin{lem}
Let $H^2(\mathbb R^d)$ be the Sobolev space of second order, and set
$$
H^2_{\rm comp}(\mathbb R^d) = \{ \psi \in H^2(\mathbb R^d) \,|\, {\rm supp}\,\psi \ {\rm is \ compact}\}.
$$
Then the following inclusion relations hold.
\begin{equation} \label{embed}
H^2_{\mathrm{comp}}(\mathbb R^d) \subset {\mathcal D}(H) \subset {\mathcal D}(A).
\end{equation}
\end{lem}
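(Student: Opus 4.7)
The plan is to prove the two inclusions separately. For $H^2_{\mathrm{comp}}(\mathbb{R}^d) \subset \mathcal{D}(H)$, I would use the Faris--Lavine essential self-adjointness of $H$ on $C_0^\infty(\mathbb{R}^d)$ together with a standard Friedrichs mollification. Given $\psi \in H^2_{\mathrm{comp}}(\mathbb{R}^d)$ with support in a compact set $K$, mollification produces $\psi_n \in C_0^\infty(\mathbb{R}^d)$ supported in a fixed compact neighborhood $K' \supset K$ with $\psi_n \to \psi$ in $H^2(\mathbb{R}^d)$. On $K'$ the multiplier $-|x|^\epsilon + q$ is bounded (by Condition~\ref{con}, $q_1 \in C^1$ and $q_2 \in L^\infty$), so $H\psi_n$ converges in $L^2$ to the distributional object $H\psi$. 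Because $\mathcal{D}(H)$ is the graph closure of $C_0^\infty(\mathbb{R}^d)$ under $H$, this gives $\psi \in \mathcal{D}(H)$ with $H\psi$ coinciding with the distributional expression.

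For $\mathcal{D}(H) \subset \mathcal{D}(A)$ I would establish the $H$-boundedness estimate
$$\|A\psi\| \leq C\bigl(\|\psi\| + \|H\psi\|\bigr), \qquad \psi \in C_0^\infty(\mathbb{R}^d),$$
and then pass to the limit using closedness of $A$ (self-adjoint by the preceding discussion) and density of $C_0^\infty$ in $\mathcal{D}(H)$ under the graph norm: for $\psi \in \mathcal{D}(H)$ choose $\psi_n \in C_0^\infty$ converging in graph norm of $H$; then $(A\psi_n)$ is Cauchy in $\mathcal{H}$, so $\psi \in \mathcal{D}(A)$. Since $A = p^f - \tfrac{i}{2}(\Delta f)$ with $(\Delta f)$ bounded (a direct computation gives $(\Delta f) = O(r^{-\epsilon/2-1})$) and $|p^f \psi| \leq r^{-\epsilon/2}|\nabla \psi|$, the estimate reduces to controlling $\|r^{-\epsilon/2}p\psi\|$ by $\|\psi\|$ and $\|H\psi\|$.

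For the latter I would integrate by parts to obtain
$$\|r^{-\epsilon/2}p\psi\|^2 = \langle \psi, r^{-\epsilon}(-\Delta)\psi\rangle + \epsilon\langle \psi, r^{-\epsilon-1}(\nabla r)\cdot \nabla \psi\rangle,$$
then use $-\Delta = 2H + 2|x|^\epsilon - 2q$ to rewrite
$$\langle \psi, r^{-\epsilon}(-\Delta)\psi\rangle = 2\langle \psi, r^{-\epsilon}H\psi\rangle + 2\langle \psi, r^{-\epsilon}|x|^\epsilon \psi\rangle - 2\langle \psi, r^{-\epsilon}q\psi\rangle.$$
Here $r^{-\epsilon}|x|^\epsilon$ is bounded, and by Condition~\ref{con} the factor $r^\epsilon$ appearing in the growth bound on $q_1$ is cancelled by $r^{-\epsilon}$, so $\|r^{-\epsilon}q\|_\infty < \infty$ in both regimes $0<\epsilon<2$ and $\epsilon = 2$. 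The remaining commutator term is bounded by $C\|\psi\|\,\|r^{-\epsilon/2}p\psi\|$ via Cauchy--Schwarz applied to the splitting $r^{-\epsilon-1} = r^{-\epsilon/2-1} \cdot r^{-\epsilon/2}$ with $r^{-\epsilon/2-1}\leq 1$, and Young's inequality absorbs it into the left-hand side.

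The main obstacle, such as it is, is checking that the integration-by-parts commutator can indeed be absorbed uniformly for $\epsilon \in (0,2]$ and that Condition~\ref{con} delivers boundedness of $r^{-\epsilon}q$ in both the subcritical and critical ($\epsilon = 2$) cases; the rest is a formal density-and-closedness argument. No spectral theory or sophisticated functional calculus is invoked.
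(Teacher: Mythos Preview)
The paper does not actually give a proof of this lemma; it states at the start of Section~2 that the proofs of the preparatory lemmas are omitted, with a reference to \cite{i, sig}. Your argument is a correct and standard way to supply the details: mollification plus local boundedness of the potential gives the first inclusion, and the relative bound $\|A\psi\| \leq C(\|\psi\| + \|H\psi\|)$ on $C_0^\infty(\mathbb R^d)$, established via the kinetic-energy identity $\|r^{-\epsilon/2}p\psi\|^2 = \langle\psi, r^{-\epsilon}(-\Delta)\psi\rangle + \text{(lower order)}$ together with $-\Delta = 2H + 2|x|^\epsilon - 2q$ and Condition~\ref{con}, gives the second by density and closedness of $A$. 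All the ingredients you cite (boundedness of $\Delta f$, of $r^{-\epsilon}|x|^\epsilon$, and of $r^{-\epsilon}q$ in both regimes $0<\epsilon<2$ and $\epsilon=2$) are indeed available from the explicit formulae and Condition~\ref{con}, and the absorption of the commutator term via Cauchy--Schwarz and Young is routine. There is nothing missing.
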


We consider commutators with a weight $\Theta$ inside:
\begin{equation*}
[ H, iA ]_{\Theta} := i(H\Theta A - A\Theta H).
\end{equation*}
Let $\Theta =\Theta(f)$ be a non-negative smooth function with bounded derivatives.
More explicitly, if we denote its derivatives in $f$ by primes such as $\Theta'$, then
\begin{equation} \label{theta1}
\Theta \geq 0, \quad |\Theta^{(k)}|\leq C_k, \ \  k=0,1,2,\ldots .
\end{equation}
We define the quadratic form $[H,iA]_{\Theta}$ on $C^\infty_0(\mathbb R^d)$,
and then extend it to $\mathcal D(A)$ according to the following lemma.

\begin{lem} \label{lem21}
Suppose Condition \ref{con}, and let $\Theta$ be a non-negative smooth function with bounded derivatives {\rm \eqref{theta1}}.
Then, as quadratic forms on $C_0^{\infty}({\mathbb R}^d)$,
\begin{align*}
\begin{split} \label{commest0}
[H, iA]_{\Theta} &= p_j(\nabla^2 f)^{jk}\Theta p_k + (p^f)^*\Theta'p^f + \frac{1}{2}{\rm Re}\left( (\Delta f)\Theta p_i\delta^{ij}p_j \right) \\
  &\quad \, - \frac{1}{2}p_i(\Delta f)\Theta\delta^{ij}p_j - \frac{1}{2}{\rm Im}\left( (\nabla |\nabla f|^2)^j\Theta'p_j \right) - {\rm Im}\left( 2q_2\Theta p^f \right) \\
  &\quad \, - {\rm Re}\left( |\nabla f|^2\Theta' H \right) + \epsilon (\nabla f)^k|x|^{\epsilon-2}x_k\Theta + q_{\Theta} - \frac{1}{4}|\nabla f|^4\Theta''';
\end{split} \\
q_{\Theta} &= - (\nabla^f q_1)\Theta + q_2(\Delta f)\Theta + |\nabla f|^2q_2\Theta' \notag \\
  &\quad \, - \frac{1}{4}(\nabla^f |\nabla f|^2)\Theta'' - \frac{1}{4}|\nabla f|^2(\Delta f)\Theta''. \notag
\end{align*}
In particular noting the formulae \eqref{rx} below and using the Cauchy-Schwarz inequality $[H, iA]_{\Theta}$ restricted to $C_0^{\infty}({\mathbb R}^d)$ extends to a bounded form on $\mathcal D(A)$.
Here, we regard $\mathcal D(A)$ as the Banach space with graph norm.
\end{lem}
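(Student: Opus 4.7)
My plan is to establish the identity by direct algebraic manipulation, starting from the decomposition
$$
H\Theta A - A\Theta H = [H,\Theta]\,A + \Theta\,[H,A] + [\Theta,A]\,H,
$$
which isolates the standard commutator term $\Theta[H,A]$ from the two correction terms arising because $\Theta$ sits between $H$ and $A$. All three pieces will be expanded on $C_0^\infty(\mathbb{R}^d)$, then reassembled.

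First I would compute the ordinary commutator $i[H,A]$. Since $A = p^f - \tfrac{i}{2}(\Delta f)$ with $p^f = -i(\partial_j f)\delta^{jk}\partial_k$, the kinetic piece $[-\tfrac{1}{2}\Delta,\,p^f]$ produces the Hessian term $p_j(\nabla^2 f)^{jk}p_k$ together with lower-order contributions involving $\Delta f$, after symmetrizing using $\mathrm{Re}$. The commutator of $-\tfrac{1}{2}\Delta$ with the multiplication operator $-\tfrac{i}{2}(\Delta f)$ produces further derivatives of $\Delta f$. The commutator $[-|x|^\epsilon, A]$ yields the classical force term $\epsilon(\nabla f)^k|x|^{\epsilon-2}x_k$, since $\nabla^f(|x|^\epsilon) = \epsilon|x|^{\epsilon-2}(\nabla f)\cdot x$. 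The perturbation splits: $q_1\in C^1$ gives $-(\nabla^f q_1)$ cleanly, while $q_2\in L^\infty$ is not differentiable, so $[q_2,A]$ must be kept in its symmetrized form $-\mathrm{Im}(2q_2 p^f)$.

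Next I would expand $[H,\Theta]A$ and $[\Theta,A]H$. Because $\Theta = \Theta(f)$, every commutator with $\Theta$ brings down a $\Theta^{(k)}(f)$ times first-order operators built from $\nabla f$. The commutator $[-\tfrac{1}{2}\Delta,\Theta]$ yields $\Theta'$ and $\Theta''$ terms that, once paired with $A$, combine to produce $(p^f)^*\Theta' p^f$, $-\tfrac{1}{2}\mathrm{Im}((\nabla|\nabla f|^2)^j\Theta' p_j)$, the $\Theta''$ contributions absorbed into $q_\Theta$, and—after one more commutation with $p^f$—the $-\tfrac{1}{4}|\nabla f|^4\Theta'''$ term. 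The key algebraic trick is to recognize that part of $\Theta[H,A]$ combined with $[\Theta,A]H$ can be re-expressed via $H$ itself, giving the substitution $-\mathrm{Re}(|\nabla f|^2\Theta' H)$; this is the mechanism by which $-|x|^\epsilon$ gets absorbed into $H$ rather than left as an unbounded standalone term. Finally, to extend the resulting form from $C_0^\infty$ to $\mathcal{D}(A)$, I would observe that each coefficient is bounded or controlled by Condition~\ref{con} and the formulae \eqref{rx}, so that Cauchy--Schwarz bounds every summand by $\|\psi\|^2 + \|A\psi\|^2$.

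The principal obstacle is combinatorial bookkeeping: the number of terms multiplies quickly, and each requires the correct symmetrization ($\mathrm{Re}$ versus $\mathrm{Im}$) before it is individually well defined. In particular, isolating the precise coefficients of $\Theta'$, $\Theta''$, $\Theta'''$ while correctly channeling the $q_2$ piece into an $\mathrm{Im}$-form, and recognizing the re-insertion of $H$ that produces the $-\mathrm{Re}(|\nabla f|^2\Theta' H)$ term, is the delicate part. Once all cancellations are tracked, the identity of Lemma~\ref{lem21} follows.
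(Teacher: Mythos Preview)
The paper does not actually give a proof of this lemma: in Section~2 the author writes ``For simplicity, we omit the proofs of these (cf.~\cite{i, sig}).'' So there is nothing in the paper to compare your argument against beyond the bare statement.

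That said, your approach is the standard and correct one. The decomposition
\[
H\Theta A - A\Theta H = [H,\Theta]A + \Theta[H,A] + [\Theta,A]H
\]
is the natural starting point, and your identification of the mechanisms is right: the Hessian term and $-\mathrm{Im}(2q_2\Theta p^f)$ come from $\Theta[H,A]$; the $(p^f)^*\Theta'p^f$, $\Theta''$, and $\Theta'''$ pieces come from the interaction of $[-\tfrac12\Delta,\Theta]$ with $A$; and the crucial step of re-absorbing the unbounded potential $-|x|^\epsilon$ into the operator $H$ via the term $-\mathrm{Re}(|\nabla f|^2\Theta' H)$ is exactly what makes each summand in the final expression separately controllable by $\|\psi\|^2+\|A\psi\|^2$. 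Your assessment that the difficulty is purely combinatorial bookkeeping, not conceptual, is accurate; once the terms are tracked carefully the identity drops out, and the extension to $\mathcal D(A)$ follows from Condition~\ref{con}, the formulae \eqref{rx}, and Cauchy--Schwarz, as you say.
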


We have the following formulae (cf. \eqref{f}): for $r \geq 2$
\begin{equation}
\begin{split}\label{rx}
|\nabla r|^2&=1, \qquad (\nabla^2 f)^{jk}=r^{-\epsilon/2-1}\delta^{jk}-\left( \frac{\epsilon}{2}+1 \right)r^{-\epsilon/2-1}(\nabla r)^j(\nabla r)^k, \\
(\nabla r)^j&=x^jr^{-1}, \qquad \Delta f=(d-\frac{\epsilon}{2}-1)r^{-\epsilon/2-1}, \qquad \Delta r =(d-1)r^{-1}.
\end{split}
\end{equation}

On the other hand, throughout the paper we shall use the notation
\begin{equation*}
{\rm Im}(A\Theta H)=\frac{1}{2i}(A\Theta H - H\Theta A)
\end{equation*}
as a quadratic form defined on ${\mathcal D}(H)$, i.e. for $\psi \in {\mathcal D}(H)$
$$\langle {\rm Im}(A\Theta H) \rangle_{\psi} = \frac{1}{2i}\left( \langle A\psi, \Theta H\psi \rangle - \langle H\psi, \Theta A\psi \rangle \right).$$
Note that by the embedding \eqref{embed} the above quadratic form is well-defined.
Obviously the quadratic forms $[H, iA]_{\Theta}$ and $2{\rm Im}(A\Theta H)$ coincide on $C_0^{\infty}(\mathbb R^d)$,
and hence we obtain
\begin{equation}\label{equal}
[H, iA]_{\Theta} = 2{\rm Im}(A\Theta H) \quad {\rm on} \ {\mathcal D}(H).
\end{equation}

Finally using the function $\chi$ of \eqref{chi} we define $\chi_n, \bar\chi_n$ for $n \geq 0$ by
$$
\chi_n = \chi(f/R_n), \quad \bar\chi_n = 1 - \chi_n.
$$


\section{Besov bound}

In this section we discuss the locally uniform Besov bound for the resolvent $R(z)$.
Lemma~\ref{bblem} and Proposition~\ref{bbprop} in Subsection~3.1 is a key to prove Theorem~\ref{bb}.
In Subsection~3.2 we prove Theorem~\ref{bb} by Proposition~\ref{bbprop} and contradiction.

\subsection{Commutator estimate}

We introduce the regularized weight
\begin{equation} \label{theta}
\Theta= \Theta_{\nu}^{\delta}=\int_0^{f/R_{\nu}}(1+s)^{-1-\delta}ds=\left[ 1-(1+f/R_{\nu})^{-\delta} \right]/\delta; \ \delta>0, \nu\geq 0,
\end{equation}
and compute derivatives  in $f$:
\begin{equation} \label{thetabibun}
\Theta'=(1+f/R_{\nu})^{-1-\delta}/R_{\nu}, \ \Theta''=-(1+\delta)(1+f/R_{\nu})^{-2-\delta}/R_{\nu}^2.
\end{equation}

\begin{prop} \label{bbprop}
Suppose Condition~\ref{con}, let $I \subseteq \mathbb R$ be any relatively compact open subset, and fix any $\delta \in (0, \min\{ 1, \rho, \epsilon' \})$ in the definition \eqref{theta} of $\Theta$.
Then there exist $C>0$ and $n \geq 0$ such that for all $\phi=R(z)\psi$ with $z \in I_{\pm}$ and $\psi \in \mathcal B$ and for all $\nu \geq 0$
\begin{equation}
\begin{split}
  &\quad \hspace{-3mm} \| \Theta'^{1/2}\phi \|^2 + \| \Theta'^{1/2}A\phi \|^2 + \langle p_jh^{jk}\Theta p_k \rangle_{\phi} \\
  &\leq C\left( \| \phi \|_{\mathcal B^*} \| \psi \|_{\mathcal B} + \| A\phi \|_{\mathcal B^*} \| \psi \|_{\mathcal B} + \| \chi_n \Theta^{1/2} \phi \|^2 \right).
\end{split}
\end{equation}
\end{prop}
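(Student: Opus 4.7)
The plan is to apply the commutator identity \eqref{equal}, $\langle[H,iA]_\Theta\rangle_\phi = 2\,\mathrm{Im}\langle A\phi,\Theta H\phi\rangle$, to $\phi = R(z)\psi \in \mathcal D(H) \subseteq \mathcal D(A)$, and then balance a Mourre-type lower bound for the left-hand side against a resolvent-based upper bound for the right-hand side. For the right-hand side, substituting $H\phi = z\phi+\psi$ and using the identity $[\Theta,A] = i|\nabla f|^2\Theta'$ (from expression \eqref{A}) yields
\begin{equation*}
2\,\mathrm{Im}\langle A\phi,\Theta H\phi\rangle = \lambda\langle|\nabla f|^2\Theta'\rangle_\phi \pm 2\Gamma\,\mathrm{Re}\langle A\phi,\Theta\phi\rangle + 2\,\mathrm{Im}\langle A\phi,\Theta\psi\rangle,
\end{equation*}
where $z = \lambda\pm i\Gamma$ with $\lambda\in I$ and $\Gamma\in(0,1)$. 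The last term is dominated by $C\|A\phi\|_{\mathcal B^*}\|\psi\|_{\mathcal B}$ via $\Theta\le 1/\delta$; the $\Gamma$-cross-term is handled by Cauchy--Schwarz combined with the absorption identity $\Gamma\|\phi\|^2 \le \|\phi\|_{\mathcal B^*}\|\psi\|_{\mathcal B}$, which follows from $\mathrm{Im}\langle\phi,(H-z)\phi\rangle = -\Gamma\|\phi\|^2$; the $\lambda$-term is $O(\|\Theta'^{1/2}\phi\|^2)$ and is meant to be cancelled against a matching term in the lower bound.

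For the left-hand side I expand $\langle[H,iA]_\Theta\rangle_\phi$ using Lemma~\ref{lem21} and the formulae \eqref{rx}. Three principal positive ingredients emerge. First, $(p^f)^*\Theta'p^f = \|\Theta'^{1/2}p^f\phi\|^2$ reproduces $\|\Theta'^{1/2}A\phi\|^2$ modulo an absorbable cross-term coming from $A = p^f - \tfrac{i}{2}\Delta f$ and the decay $|\Delta f| \le Cr^{-\epsilon/2-1}$. Second, $p_j(\nabla^2 f)^{jk}\Theta p_k$ dominates $\langle p_j h^{jk}\Theta p_k\rangle_\phi$ by the construction \eqref{h} of $h$. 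Third, the crucial combination of the repulsive-potential term $\epsilon(\nabla f)^k|x|^{\epsilon-2}x_k\Theta$ with $-\mathrm{Re}(|\nabla f|^2\Theta'H)$, the latter expanded through $H = \tfrac{1}{2}p^2 - r^\epsilon + q$, exploits the cancellation $|\nabla f|^2 r^\epsilon = 1$ valid on $\{r\ge 2\}$ to extract a bare $\Theta'$ multiplier, producing $\|\Theta'^{1/2}\phi\|^2$ up to absorbable pieces. All remaining subprincipal terms ($q_\Theta$, the $\Delta f$-commutators, the $\Theta'$-transport terms, and the $\Theta'''$-term) carry extra decay bounded by $f^{-1-\rho}$, $f^{-1-\epsilon'}$, or $f^{-2-\delta}$ thanks to Condition~\ref{con} and the choice $\delta<\min\{1,\rho,\epsilon'\}$; for $n$ sufficiently large, on $\{\bar\chi_n \ne 0\}=\{f>R_n\}$ they are absorbed into the principal positive terms, while on $\{\chi_n\ne 0\}$ they are controlled by $C\|\chi_n\Theta^{1/2}\phi\|^2$.

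Combining the two one-sided bounds and absorbing the residual $O(\|\Theta'^{1/2}\phi\|^2)$ contribution from the $\lambda$-term into the corresponding principal term on the LHS produces the claimed inequality. The main obstacle is keeping the estimate uniform as $\Gamma \to 0$: the $\pm 2\Gamma\,\mathrm{Re}\langle A\phi,\Theta\phi\rangle$ cross-term would give a constant blowing up without the absorption identity, so the argument must be carefully arranged so that only $\lambda$, not $z$, plays the role of spectral parameter in the Mourre-type positivity. A secondary delicate point is the identity $|\nabla f|^2 r^\epsilon = 1$; it is exactly this cancellation that allows the $\epsilon = 2$ case to be handled in the same framework as $0<\epsilon<2$, and that motivates the definition of $f$ in \eqref{f} and of the $f$-weighted spaces $\mathcal B$, $\mathcal B^*$.
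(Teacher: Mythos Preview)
Your overall strategy --- expand $[H,iA]_\Theta$ via Lemma~\ref{lem21}, extract positive principal pieces, absorb the remainder --- is the paper's strategy, and most of your outline is sound. Two steps, however, do not go through as written.

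First, the Hessian term does \emph{not} dominate $p_jh^{jk}\Theta p_k$. From \eqref{rx} one has, for $r\ge 2$,
\[
(\nabla^2 f)^{jk}-h^{jk}=-\tfrac{\epsilon}{2}\,r^{-\epsilon/2-1}(\nabla r)^j(\nabla r)^k-2Cr^{-\epsilon/2-1}f^{-1-\rho}\delta^{jk}\le 0,
\]
so $\nabla^2 f$ carries an \emph{extra} negative radial piece $-\tfrac{\epsilon}{2}(p^r)^*r^{-\epsilon/2-1}\Theta p^r$ that must be compensated. The paper does this by introducing a second undoing-the-square term, an $\tfrac{\epsilon}{2}r^{-\epsilon/2-1}\Theta$ multiple of $(H-z)$, into the correction $\gamma$; expanding $\tfrac{\epsilon}{2}r^{-\epsilon/2-1}\Theta(H-\lambda)$ produces $\tfrac{\epsilon}{4}p_jr^{-\epsilon/2-1}\Theta\delta^{jk}p_k$ (up to $Q$), which absorbs the excess. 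This is a separate mechanism from the $|\nabla f|^2\Theta'$ undoing you describe in your third point, and it also consumes the large potential term $\tfrac{\epsilon}{2}r^{\epsilon/2-1}\Theta$.

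Second, and more seriously, the $\Gamma$-cross-term cannot be controlled by ``Cauchy--Schwarz combined with the absorption identity $\Gamma\|\phi\|^2\le\|\phi\|_{\mathcal B^*}\|\psi\|_{\mathcal B}$'' alone. After Cauchy--Schwarz you face a factor like $\Gamma\|\Theta^{1/2}A\phi\|^2$ or $\Gamma^{1/2}\|A\phi\|$, and the absorption identity gives no uniform control on $\|A\phi\|_{\mathcal H}$ as $\Gamma\to 0$. The paper's remedy is to stay at the quadratic-form level and bound
\[
\mp\Gamma\,\Theta^{1/2}A\Theta^{1/2}\ \ge\ -C\Gamma\,\Theta^{1/2}r^{-\epsilon/2}(H-\lambda)r^{-\epsilon/2}\Theta^{1/2}-C\Gamma,
\]
using that $A=\mathrm{Re}\,p^f$ is dominated by the weighted kinetic energy; both pieces on the right are then rewritten as further contributions to $\mathrm{Re}\bigl(\gamma(H-z)\bigr)$ with $|\gamma|\le C$. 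Only \emph{after} this form inequality (Lemma~\ref{bblem}) is established does one evaluate on $\phi=R(z)\psi$, whereupon $\mathrm{Im}\langle A\phi,\Theta(H-z)\phi\rangle\le C\|A\phi\|_{\mathcal B^*}\|\psi\|_{\mathcal B}$ and $\mathrm{Re}\langle\gamma\phi,(H-z)\phi\rangle\le C\|\phi\|_{\mathcal B^*}\|\psi\|_{\mathcal B}$ yield the asserted right-hand side directly --- no separate treatment of the $\lambda$- and $\Gamma$-parts is needed.
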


We first note that $\Theta$ defined by \eqref{theta} has following properties.
\begin{lem} \label{thetaest}
Suppose Condition~\ref{con}, and fix any $\delta>0$ in the definition \eqref{theta} of $\Theta$.
Then there exist $c, C, C_k>0, \, k=2, 3, \ldots$, such that for any $k=2, 3, \ldots$ and uniformly in $\nu \geq 0$
\begin{align*}
&\quad \hspace{-3mm} c/R_{\nu} \leq \Theta \leq \min\{ C, f/R_{\nu} \}, \\
&\quad \hspace{-3mm} c\left( \min\{ R_{\nu}, f \} \right)^{\delta}f^{-1-\delta}\Theta \leq \Theta' \leq f^{-1}\Theta, \\
&\quad \hspace{-3mm} 0 \leq (-1)^{k-1}\Theta^{(k)} \leq C_kf^{-k}\Theta.
\end{align*}
\end{lem}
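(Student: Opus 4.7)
The statement is a purely computational set of size estimates for the weight $\Theta_\nu^\delta$ defined in \eqref{theta} and its derivatives, so the plan is to reduce everything to elementary one-variable estimates and then dichotomize on the size of $s := f/R_\nu$. The underlying observation is that the whole weight depends on $f$ and $R_\nu$ only through the scalar combination $s$: we have
\[
\Theta = \delta^{-1}\bigl(1-(1+s)^{-\delta}\bigr) = \int_0^s(1+t)^{-1-\delta}\,dt,
\]
and by a direct induction the higher derivatives in $f$ admit the closed form
\[
\Theta^{(k)} = (-1)^{k-1}\frac{(\delta+1)(\delta+2)\cdots(\delta+k-1)}{R_\nu^k}(1+s)^{-k-\delta}\qquad(k\geq 1),
\]
which already gives the sign assertion $(-1)^{k-1}\Theta^{(k)}\geq 0$ in the third line. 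So the lemma reduces to checking, for each of the stated two-sided inequalities, a corresponding inequality between elementary functions of $s$.

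For the first line, the upper bound $\Theta\leq f/R_\nu=s$ is immediate from the integral representation since the integrand is at most $1$; the bound $\Theta\leq 1/\delta$ is the $s\to\infty$ limit. For the lower bound $\Theta\geq c/R_\nu$, I would split at $s=1$: when $s\leq 1$ the integrand is bounded below by $2^{-1-\delta}$, giving $\Theta\geq 2^{-1-\delta}s\geq 2^{-1-\delta}/R_\nu$ (using $f\geq 1$), and when $s\geq 1$ one has $\Theta\geq\Theta(1)=\delta^{-1}(1-2^{-\delta})$, which exceeds $c/R_\nu$ since $R_\nu\geq 1$.

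For the second line, after substituting $f=sR_\nu$ everything collapses to the single inequality
\[
c\,s^{-1-\delta}\min\{1,s\}^\delta\Theta(s)\;\leq\;(1+s)^{-1-\delta}\;\leq\;s^{-1}\Theta(s),
\]
which I would verify on the two regimes $s\leq 1$ and $s\geq 1$. The right inequality is the already-noted lower bound $\Theta(s)\geq\int_0^s(1+s)^{-1-\delta}\,dt=s(1+s)^{-1-\delta}$. The left inequality reduces in the regime $s\leq 1$ to $s^{-1}\Theta(s)\lesssim 1$ (clear from $\Theta\leq s$), and in the regime $s\geq 1$ to $s^{-1-\delta}\Theta(s)\lesssim(1+s)^{-1-\delta}$, which follows from $\Theta\leq\delta^{-1}$. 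The third line is handled in exactly the same spirit: after substitution the desired bound $(-1)^{k-1}\Theta^{(k)}\lesssim f^{-k}\Theta$ becomes $s^k(1+s)^{-k-\delta}\lesssim\Theta(s)$, which for $s\leq 1$ follows from $s^k\leq s\lesssim\Theta(s)$ (since $k\geq 2$) and for $s\geq 1$ follows from $\Theta(s)\gtrsim 1$ and $s^k(1+s)^{-k-\delta}\lesssim 1$.

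There is no real obstacle here; the only thing to watch is the uniformity of the constants in both $\nu\geq 0$ and $f\geq 1$, which is why the dichotomy $s\lessgtr 1$ (equivalently $f\lessgtr R_\nu$) is forced and why the lower bound in the second line must carry the prefactor $\min\{R_\nu,f\}^\delta$ rather than a pure power of $f$. Once the two regimes are treated separately and the constants absorbed into a single $c$ or $C_k$, all three displays follow.
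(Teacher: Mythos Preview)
Your proposal is correct and takes essentially the same elementary approach as the paper: both reduce everything to the explicit formulas for $\Theta$ and its derivatives and check the bounds by hand. The paper's proof is much terser---it declares all estimates ``clear'' except the lower bound in the second line, which it reduces to the single inequality $(\min\{R_\nu,f\})^{\delta}f^{-1-\delta}(\min\{R_\nu,f\}/R_\nu)(1+f/R_\nu)^{1+\delta}R_\nu\leq C$---but that inequality is exactly what your dichotomy $s\lessgtr 1$ verifies, so the two arguments coincide.
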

\begin{proof}
By the definition of $\Theta$ in \eqref{theta} and expressions of derivatives of it as \eqref{thetabibun}, 
the asserted estimates are clearly hold except for the first estimate in the second line.
But this estimate follows by using the last estimate of the first line and the following inequality:
\begin{equation*}
\bigl( {\rm min}\{ R_{\nu}, f \} \bigr)^{\delta}f^{-1-\delta}\bigl( {\rm min}\{ R_{\nu}, f \}/R_{\nu} \bigr)\left( (1+f/R_{\nu})^{1+\delta}R_{\nu} \right) \leq C.
\end{equation*}
\end{proof}
The following lemma is a key to prove Theorem~\ref{bb}.
\begin{lem} \label{bblem}
Suppose Condition~\ref{con}, let $I \subseteq \mathbb R$ be any relatively compact open subset, and fix any $\delta \in (0, \min\{1, \rho, \epsilon' \})$ in the definition \eqref{theta} of $\Theta$.
Then there exist $c, C>0$ and $n \geq 0$ such that uniformly in $z \in I_{\pm}$ and $\nu \geq 0$, as a quadratic forms on $\mathcal D(H)$,
\begin{equation}
\begin{split}
  &\quad \hspace{-3mm} {\rm Im}\bigl( A\Theta(H-z) \bigr) \\
  &\geq c\Theta' +cA\Theta'A +cp_jh^{jk}\Theta p_k - C\chi_n^2\Theta - {\rm Re}\bigl( \gamma(H-z) \bigr),
\end{split}
\end{equation} 
where $\gamma=\gamma_{z,\nu}$ is a uniformly bounded complex-valued function: $|\gamma| \leq C$.
\end{lem}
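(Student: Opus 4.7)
My plan is to start from the identity \eqref{equal}, $2\,\mathrm{Im}(A\Theta H)=[H,iA]_\Theta$ on $\mathcal D(H)$, combined with the explicit expansion from Lemma~\ref{lem21}. Using $[A,\Theta]=-i|\nabla f|^2\Theta'$ (which follows since $A=p^f-\tfrac{i}{2}\Delta f$ and $\Theta$ is a function of $f$), a direct computation yields
\begin{equation*}
\mathrm{Im}\bigl(A\Theta(H-z)\bigr)=\tfrac{1}{2}[H,iA]_\Theta-\mathrm{Im}(z)\,\mathrm{Re}(A\Theta)+\tfrac{\lambda}{2}|\nabla f|^2\Theta',
\end{equation*}
so the task reduces to bounding the right-hand side from below.

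From the formula for $[H,iA]_\Theta$ I identify two positive momentum-quadratic contributions. By \eqref{rx}, $(\nabla^2 f)^{jk}=r^{-\epsilon/2-1}\bigl(\delta^{jk}-(\tfrac{\epsilon}{2}+1)(\nabla r)^j(\nabla r)^k\bigr)$; its transversal part dominates $ch^{jk}r^{\epsilon/2+1}$ via the construction of $h$, giving $cp_jh^{jk}\Theta p_k$, while the radial defect is absorbed into $(p^f)^*\Theta' p^f$ because $p^f=r^{-\epsilon/2}p^r$ and the weight estimates of Lemma~\ref{thetaest} apply. Using $A=p^f-\tfrac{i}{2}\Delta f$, the identity $(p^f)^*\Theta'p^f=A\Theta'A+O(r^{-\epsilon-2}\Theta)$ produces the $cA\Theta' A$ term. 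The pair $\tfrac{1}{2}\mathrm{Re}((\Delta f)\Theta p^2)-\tfrac{1}{2}p_i(\Delta f)\Theta\delta^{ij}p_j$ collapses to $-\tfrac{1}{4}\Delta\bigl((\Delta f)\Theta\bigr)$ by a short commutator computation, a bounded scalar of order $r^{-\epsilon/2-3}$.

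The only explicit occurrence of $H$, namely $-\mathrm{Re}(|\nabla f|^2\Theta' H)$, I split as $-\mathrm{Re}(\gamma(H-z))-\lambda|\nabla f|^2\Theta'$ with $\gamma=|\nabla f|^2\Theta'$, which is real and uniformly bounded by Lemma~\ref{thetaest} since $|\nabla f|^2=r^{-\epsilon}\le 1$ and $\Theta'\le f^{-1}\Theta\le C$. The remaining scalar $\epsilon(\nabla f)^k|x|^{\epsilon-2}x_k\Theta+q_\Theta-\tfrac{1}{4}|\nabla f|^4\Theta'''+O(r^{-\epsilon}\Theta')$ is dominated by its leading term $\epsilon r^{\epsilon/2-1}\Theta$; using $r\sim f^{1/(1-\epsilon/2)}$ for $0<\epsilon<2$ and $r\sim e^{f-1}$ for $\epsilon=2$, this equals $\epsilon f^{-1}\Theta$ or $2\Theta$ respectively at $r\ge 2$, exceeding $c\Theta'$ by a factor diverging as $f\to\infty$. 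Condition~\ref{con}, Lemma~\ref{thetaest}, and the restriction $\delta<\min\{1,\rho,\epsilon'\}$ then bound $q_\Theta$, $|\nabla f|^4|\Theta'''|$, the $-\lambda|\nabla f|^2\Theta'$ piece, and the remaining defects by $c\Theta'/2$ outside $\{f\le R_n\}$ for $n$ sufficiently large, while inside this ball every scalar is subsumed into $C\chi_n^2\Theta$.

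The main obstacle I anticipate is the uniform control of $-\mathrm{Im}(z)\,\mathrm{Re}(A\Theta)$ for $\Gamma\in(0,1)$. My plan is a two-step handling: first a Cauchy--Schwarz splitting
\begin{equation*}
\bigl|\Gamma\,\mathrm{Re}\langle A\psi,\Theta\psi\rangle\bigr|\le \tfrac{c}{2}\langle A\Theta' A\rangle_\psi+\tfrac{\Gamma^2}{2c}\langle\Theta^{2}/\Theta'\rangle_\psi,
\end{equation*}
then absorbing the residual $\Gamma^{2}\Theta^{2}/\Theta'$ via a complex-valued correction to $\gamma$ that exchanges the $\mathrm{Im}(z)$-factor for an $(H-z)$-factor using the identity $\mathrm{Im}\langle\psi,(H-z)\psi\rangle=-\mathrm{Im}(z)\|\psi\|^{2}$; the uniform boundedness of the resulting $\gamma$ is then read off from Lemma~\ref{thetaest}. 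Everything else in the proof is careful bookkeeping of the scalar estimates already listed above.
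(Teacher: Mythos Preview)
Your proposal has two genuine gaps where the claimed absorptions fail uniformly in $\nu$.

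\textbf{The radial defect.} From $(\nabla^2 f)=r^{-\epsilon/2-1}\bigl(\delta-(\tfrac{\epsilon}{2}+1)(\nabla r)\otimes(\nabla r)\bigr)$ the radial part of $\tfrac12 p_j(\nabla^2 f)^{jk}\Theta p_k$ contributes $-\tfrac{\epsilon}{4}(p^r)^*r^{-\epsilon/2-1}\Theta\,p^r$, whereas $\tfrac12(p^f)^*\Theta'p^f=\tfrac12(p^r)^*r^{-\epsilon}\Theta'p^r$. For the latter to absorb the former and still leave $cA\Theta'A$ you would need $\Theta'\ge c\,r^{\epsilon/2-1}\Theta$. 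But $r^{\epsilon/2-1}\sim (1-\epsilon/2)^{-1}f^{-1}$ for $\epsilon<2$ and $r^{\epsilon/2-1}=1$ for $\epsilon=2$, while Lemma~\ref{thetaest} gives $\Theta'\sim(\min\{R_\nu,f\})^\delta f^{-1-\delta}\Theta$, which is \emph{strictly smaller} than $f^{-1}\Theta$ once $f\gg R_\nu$ (and for $\epsilon=2$ the required inequality fails even for small $f$). The paper does not try this absorption: it couples the radial defect with the potential term $\tfrac{\epsilon}{2}r^{\epsilon/2-1}\Theta$ by adding $\tfrac{\epsilon}{2}r^{-\epsilon/2-1}\Theta$ to $\gamma$; expanding $-\mathrm{Re}\bigl(\tfrac{\epsilon}{2}r^{-\epsilon/2-1}\Theta(H-z)\bigr)$ via $H-z=\tfrac12 p^2-|x|^\epsilon+q-z$ simultaneously produces the kinetic counterterm and the scalar $\tfrac{\epsilon}{2}r^{\epsilon/2-1}\Theta$ (see \eqref{zest}). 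Your separate allocation (defect $\to(p^f)^*\Theta'p^f$, potential $\to c\Theta'$) does not close.

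\textbf{The $\Gamma$-term.} Your Cauchy--Schwarz split produces $\tfrac{\Gamma^2}{2c}\langle\Theta^2/\Theta'\rangle_\psi$. By Lemma~\ref{thetaest} one has $\Theta'\le f^{-1}\Theta$, hence $\Theta^2/\Theta'\ge f\Theta$; at $f\sim R_\nu$ this is of order $R_\nu$, so $\Theta^2/\Theta'$ is \emph{not} uniformly bounded in $\nu$. Exchanging one factor of $\Gamma$ for $(H-z)$ via the identity you cite gives, at best, a correction $\gamma_{\mathrm{new}}\sim i\Gamma\,\Theta^2/\Theta'$, still unbounded since $\Gamma$ may approach $1$; Lemma~\ref{thetaest} does not rescue this. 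The paper avoids the problem by never splitting off an $A\Theta'A$-piece here: it bounds $A$ through the kinetic energy, obtaining $\mp\Gamma\Theta^{1/2}A\Theta^{1/2}\ge -C_4\Gamma\,\Theta^{1/2}r^{-\epsilon/2}(H-\lambda)r^{-\epsilon/2}\Theta^{1/2}-C_5\Gamma$ (see \eqref{zest2}), which after a commutator identity contributes only the \emph{bounded} quantities $C_4\Gamma r^{-\epsilon}\Theta$ and $\pm iC_5$ to $\gamma$.
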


\begin{proof}
Let $I$ and $\delta$ be as in the assertion.
First using Lemmas~\ref{lem21},~\ref{thetaest}, \eqref{rx}, \eqref{equal}, the Cauchy-Schwarz inequality and a general identity holding for any $g \in C^{\infty}({\mathbb R}^d)$:
\begin{equation} \label{formula}
p_ig\delta^{ij}p_j = {\rm Re}\left( gp_i\delta^{ij}p_j \right) + \frac{1}{2}(\Delta g),
\end{equation}
we can bound uniformly in $z=\lambda \pm i \Gamma \in I_{\pm}$ and $\nu \geq 0$
\begin{equation} \label{zest}
\begin{split}
  &\quad \hspace{-3mm} {\rm Im}\bigl( A\Theta(H-z) \bigr) \\
  &\geq \frac{1}{2}p_jr^{-\epsilon/2-1}\delta^{jk}\Theta p_k - \left( \frac{\epsilon}{4}+\frac{1}{2} \right)(p^r)^*r^{-\epsilon/2-1}\Theta p^r + \frac{1}{2}(p^f)^*\Theta' p^f \\
  &\quad -\frac{1}{2}{\rm Re}\left( |\nabla f|^2\Theta'(H-z) \right) \mp \Gamma{\rm Re}(A\Theta) + \frac{\epsilon}{2}r^{\epsilon/2-1}\Theta - C_1Q \\
  &\geq \frac{1}{2}p_jr^{-\epsilon/2-1}\left( \delta^{jk} - (\nabla r)^j(\nabla r)^k + 2Cf^{-1-\rho}\delta^{jk} \right)\Theta p_k + \frac{1}{2}(p^f)^*\Theta' p^f \\
  &\quad - {\rm Re}\left( \left( \frac{1}{2}|\nabla f|^2\Theta' + \frac{\epsilon}{2}r^{-\epsilon/2-1}\Theta \right)(H-z) \right) \mp \Gamma \Theta^{1/2}A\Theta^{1/2} - C_2Q \\
  &\geq \frac{1}{4}p_jh^{jk}\Theta p_k + \frac{1}{2}\Theta' + \frac{1}{4}A\Theta'A \mp \Gamma \Theta^{1/2}A\Theta^{1/2} \\
  &\quad - {\rm Re}\left( \left( \frac{1}{2}|\nabla f|^2\Theta' + \frac{\epsilon}{2}r^{-\epsilon/2-1}\Theta - \frac{1}{2}r^{-\epsilon}\Theta' \right)(H-z) \right) - C_3Q,
\end{split}
\end{equation}
where
\begin{equation*}
Q = f^{-1-{\rm min}\{1, \rho, \epsilon'\}}\Theta + p_jr^{-\epsilon}f^{-1-{\rm min}\{1, \rho, \epsilon'\}}\Theta\delta^{jk}p_k.
\end{equation*}

To the fourth term of \eqref{zest} we apply the Cauchy-Schwarz inequality, Lemma~\ref{thetaest} and the general identity holding for any real functions $g, h \in C^1(\mathbb R^d)$:
\begin{equation*}
h{\rm Re}\left( gp^2 \right)h = {\rm Re}\left( h^2gp^2 \right) + \frac{1}{2}(\partial_j h)\delta^{jk}(\partial_k gh).
\end{equation*}
Then it follows that
\begin{equation}
\begin{split} \label{zest2}
\mp \Gamma \Theta^{1/2}A\Theta^{1/2}
  &= \mp \Gamma\Theta^{1/2}\left( {\rm Re}\,p^f \right)\Theta^{1/2} \\
  &\geq -C_4\Gamma \Theta^{1/2}r^{-\epsilon/2}(H-\lambda)r^{-\epsilon/2}\Theta^{1/2} - C_5\Gamma \\
  &= -C_4\Gamma{\rm Re}\left( \Theta^{1/2}r^{-\epsilon/2}(H-z)r^{-\epsilon/2}\Theta^{1/2} \right) \pm C_5{\rm Im}(H-z) \\
  &\geq - {\rm Re}\left( (C_4\Gamma r^{-\epsilon}\Theta \pm iC_5)(H-z) \right) - C_6Q.
\end{split}
\end{equation}
We substitute the estimate \eqref{zest2} into \eqref{zest}, and obtain
\begin{equation}
\begin{split} \label{zest3}
  &\quad \hspace{-3mm} {\rm Im}\bigl( A\Theta(H-z) \bigr) \\
  &\geq \frac{1}{4}p_jh^{jk}\Theta p_k + \frac{1}{2}\Theta' + \frac{1}{4}A\Theta'A - C_7Q \\
  &\quad - {\rm Re}\left( \left( \frac{1}{2}\left( |\nabla f|^2-r^{-\epsilon} \right)\Theta' + \frac{\epsilon}{2}r^{-\epsilon/2-1}\Theta + C_4\Gamma r^{-\epsilon}\Theta \pm iC_5 \right)(H-z) \right).
\end{split}
\end{equation}
Using \eqref{formula} and Lemma~\ref{theta} we can combine and estimate the second and fourth terms of \eqref{zest3}: For large $n \geq 0$
\begin{equation} \label{zest4}
\frac{1}{2}\Theta' - C_7Q \geq \frac{1}{4}\Theta' - C_8\chi_n^2\Theta - 2C_7{\rm Re}\left( r^{-\epsilon}f^{-1-{\rm min}\{1, \rho, \epsilon'\}}\Theta(H-z) \right).
\end{equation}
Hence by \eqref{zest3} and \eqref{zest4}, if we set
$$
\gamma = \frac{1}{2}\left(|\nabla f|^2-r^{-\epsilon} \right)\Theta' + \frac{\epsilon}{2}r^{-\epsilon/2-1}\Theta + C_4\Gamma r^{-\epsilon}\Theta \pm iC_5 - 2C_7r^{-\epsilon}f^{-1-{\rm min}\{1, \rho, \epsilon'\}}\Theta,
$$
then the assertion follows.
\end{proof}

\begin{proof}[Proof of Proposition~\ref{bbprop}]
The assertion follows immediately from Lemma~\ref{bblem}.
\end{proof}

\subsection{Besov boundedness}

Now we prove Theorem~\ref{bb} by Proposition~\ref{bbprop} and contradiction.
\begin{proof}[Proof of Theorem~\ref{bb}]
Let $I \subset \mathbb R$ be any relatively compact open subset.
We prove the assertion only for the upper sign.

\vspace{1mm}
\noindent
{\it Step 1.}
We assume that for $C_1>0$ large enough
\begin{equation} \label{besovest}
\| \phi \|_{\mathcal B^*} \leq C_1 \| \psi \|_{\mathcal B},
\end{equation}
then the bound \eqref{bbound} holds.
In fact, the last term on the left-hand side of \eqref{bbound} clearly satisfies the desired estimate by the identity
$$
r^{-\epsilon}p_j\delta^{jk}p_k\phi = r^{-\epsilon}\psi + r^{-\epsilon}(|x|^{\epsilon}-q+z)\phi
$$
and Condition~\ref{con}.
Hence it suffice to consider the second and third terms of \eqref{bbound}.
Fix any $\delta \in (0, \min\{1, \rho, \epsilon' \})$.
Then by Proposition~\ref{bbprop} and \eqref{besovest} there exists $C_2>0$ such that for any $\phi = R(z)\psi$ with $z \in I_+$ and $\psi \in \mathcal B$ uniformly in $\epsilon_1 \in (0, 1)$ and $\nu \geq 0$
\begin{equation} \label{epsilon1}
\| \Theta'^{1/2}A\phi \|^2 + \langle p_jh^{jk}\Theta p_k \rangle_{\phi} \leq \epsilon_1\| A\phi \|_{\mathcal B^*}^2 + \epsilon_1^{-1}C_2\| \psi \|_{\mathcal B}^2.
\end{equation}
In the first term on the left-hand side of \eqref{epsilon1} for each $\nu \geq 0$, noting the expression of $\Theta'$ in \eqref{thetaest}, we restrict the integral region to $B_{R_{\nu+1}} \setminus B_{R_{\nu}}$.
As for the second term on the same side we look at the estimate \eqref{epsilon1} for any fixed $\nu \geq 0$, say $\nu=0$.
Then we have the following inequality.
$$
c_1\| A\phi \|_{\mathcal B^*}^2 + c_1\langle p_jh^{jk}\Theta p_k \rangle_{\phi} \leq 2\epsilon_1\| A\phi \|_{\mathcal B^*}^2 + 2\epsilon_1^{-1}C_2\| \psi \|_{\mathcal B}^2.
$$
If we let $\epsilon_1 \in (0, c_1/2)$, the rest of \eqref{bbound} follows from this estimate and \eqref{A}.
Hence \eqref{bbound} reduces to \eqref{besovest}.

\vspace{1mm}
\noindent
{\it Step 2.}
We prove \eqref{besovest} by contradiction.
Assume the opposite, and let $z_k \in I_+$ and $\psi_k \in \mathcal B$ be such that
\begin{equation} \label{psik}
\lim_{k \to \infty}\| \psi_k \|_{\mathcal B}=0, \quad \| \phi_k \|_{\mathcal B^*}=1; \quad \phi_k =R(z_k)\psi_k. 
\end{equation}
Note that then it automatically follows that
\begin{equation} \label{phikbibun}
\| r^{-\epsilon/2}p\phi_k \|_{\mathcal B^*} + \| r^{-\epsilon}p^2\phi_k \|_{\mathcal B^*} \leq C_3.
\end{equation}
In fact, arguing similarly to Step 1, we can deduce from \eqref{psik} and Proposition~\ref{bbprop} that
$$
\| A\phi_k \|_{\mathcal B^*}^2 + \langle p_jh^{jk}p_k \rangle_{\phi_k} \leq C_4, \quad p^2\phi_k = \psi_k + (|x|^{\epsilon}-q+z)\phi_k,
$$
and these combined Condition~\ref{con}, \eqref{A} and \eqref{psik} imply \eqref{phikbibun}.
Now, choosing a subsequence and retaking $I \subseteq \mathbb R$ slightly larger, we may assume that $z_k \in I_+$ converges to some $z \in I \cup I_+$.
If the limit $z$ belongs to $I_+$, the bounds
$$
\| \phi_k \|_{\mathcal B^*} \leq \| \phi_k \|_{\mathcal H} \leq \| R(z_k) \|_{\mathcal B(\mathcal H)}\| \psi_k \|_{\mathcal H} \leq \| R(z_k) \|_{\mathcal B(\mathcal H)}\| \psi_k \|_{\mathcal B}
$$
and \eqref{psik} contradict the norm continuity of $R(z) \in \mathcal B(\mathcal H)$ in $z \in I_+$.
Hence we have the limit
\begin{equation} \label{limzk}
\lim_{k \to \infty}z_k = z = \lambda \in I.
\end{equation}
Let $s > 1/2$.
By choosing a further subsequence we may assume that $\phi_k$ converges weakly to some $\phi \in \mathcal H_{-s}$.
But then $\phi_k$ actually converges strongly in $\mathcal H_{-s}$.
To see this let us fix $s' \in (1/2, s)$ and $g \in C_0^{\infty}(\mathbb R)$ with $g=1$ on a neighborhood of $I$, and decompose for any $n \geq 0$
\begin{align*}
f^{-s}\phi_k &= f^{-s}g(H)(\chi_n f^s)(f^{-s}\phi_k) + (f^{-s}g(H)f^s)(\bar\chi_n f^{s'-s})(f^{-s'}\phi_k) \\
  &\quad + f^{-s}(1-g(H))R(z_k)\psi_k.
\end{align*}
The last term on the right-hand side converges to $0$ in $\mathcal H$ due to \eqref{psik}, 
and the second term can be taken arbitrarily small in $\mathcal H$ by choosing $n \geq 0$ sufficiently large since $f^{-s}g(H)f^s$ is a bounded operator.
By the compactness of $f^{-s}g(H)$, for fixed $n \geq 0$ the first term converges strongly in $\mathcal H$.
Therefore $\phi_k$ converges to $\phi$ in $\mathcal H_{-s}$, i.e.
\begin{equation} \label{limphik}
\lim_{k \to \infty}\phi_k = \phi \ \ \ \text{in}\ \mathcal H_{-s}.
\end{equation}
By \eqref{psik}, \eqref{limzk} and \eqref{limphik} it follows that
\begin{equation} \label{1708011338}
(H-\lambda)\phi = 0 \ \, \text{in the distributional sense}.
\end{equation}
In addition, we can verify $\phi \in \mathcal B_0^*$.
In fact, let us apply Proposition~\ref{bbprop} with $\delta=2s-1>0$ to $\phi_k=R(z_k)\psi_k$, and take the limit $k \to \infty$ using \eqref{psik}, \eqref{phikbibun}, \eqref{limphik} and Lemma~\ref{theta}.
We obtain for all $\nu \geq 0$
\begin{equation} \label{1708011404}
\| \Theta'^{1/2}\phi \| \leq \| \chi_n\Theta^{1/2}\phi \| \leq C_5R_{\nu}^{-1/2}\| \chi_n f^{1/2}\phi \|.
\end{equation}
Letting $\nu \to \infty$ in \eqref{1708011404}, we obtain $\phi \in \mathcal B_0^*$, and then we conclude $\phi=0$ by \eqref{1708011338} and Theorem~\ref{rell}.
But this is a contradiction, because similarly to Step 1 we have
$$
1=\| \phi_k \|_{\mathcal B^*}^2 \leq C_6\left( \| \psi_k \|_{\mathcal B} + \| \chi_n \phi_k \|^2 \right),
$$
and, as $k \to \infty$, the right-hand side converges to $0$.
Hence \eqref{besovest} holds.
\end{proof}


\section{Radiation condition}

Our main purpose in this section is to prove the radiation condition bounds for complex spectral parameters.
In Subsection~4.1 we state and prove the key lemma to prove Theorem~\ref{rcb}.
In Subsection~4.2 we prove Theorem~\ref{rcb}. 
Corollaries~\ref{lap}-\ref{Sur} are also proved in the same subsection.

Throughout the section we suppose Condition~\ref{con2}, and prove the statements only for the upper sign for simplicity.

\subsection{Commutator estimate}

We introduced the conjugate operator $B$ as a maximal differential operator
\begin{equation*} \label{B}
B = {\rm Re}\,p^r = \frac{1}{2}\left( p^r + (p^r)^* \right),
\end{equation*}
with domain
$$
\mathcal D(B) = \{ \psi \in \mathcal H \ | \ B\psi \in \mathcal H \},
$$
and set associated asymptotic complex phase $b$: For $z = \lambda + i \Gamma \in \mathbb R \cup \mathbb R_+$
\begin{equation} \label{phaseb}
b = b_z = \eta_{\lambda}|\nabla r|\sqrt{2(z-q_1+r^{\epsilon})} + i\frac{\epsilon}{4}|\nabla r|^2r^{-1}.
\end{equation}
We note that the operator $B$ is self-adjoint on $\mathcal D(B)$ (cf.~\cite{is}). 

\begin{lem} \label{esta}
Let $I \subseteq \mathbb R$ be any relatively compact open subset.
Then there exists $C>0$ such that uniformly in $z \in I \cup I_+$
\begin{align*}
  &|a| \leq C, \quad |b| \leq Cr^{\epsilon/2}, \quad {\rm Im}\, a \geq \frac{\epsilon}{2}|\nabla r|^2r^{-\epsilon/2-1}, \quad {\rm Im}\, b \geq \frac{\epsilon}{4}|\nabla r|^2r^{-1},\\ 
  &|\ell^{\bullet j}\nabla_j a| \leq Cr^{-\epsilon/2}f^{-1-\tau}, \quad |p^r b + b^2 - 2|\nabla r|^2(z-q_1+r^{\epsilon})| \leq Cf^{-1-{\rm min}\{ \rho, \epsilon', \tau \}}.
\end{align*}
\end{lem}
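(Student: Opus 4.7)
The plan is to verify the six inequalities directly from the explicit formulas \eqref{phasea} and \eqref{phaseb}. The unifying structural observation is that on $\mathrm{supp}\,\eta_\lambda \subseteq \{r \geq r_\lambda\}$ the complex number $w := 2(z - q_1 + r^\epsilon)$ satisfies $\mathrm{Re}\, w > 2$ and $\mathrm{Im}\, w = 2\Gamma \geq 0$, so the chosen branch $\mathrm{Re}\sqrt{\,\cdot\,} > 0$ maps $w$ into the closed first quadrant, with $|\sqrt w|$ comparable to $r^{\epsilon/2}$.

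The size bounds $|a| \leq C$ and $|b| \leq Cr^{\epsilon/2}$ then follow from $|w| \leq Cr^\epsilon$ (by boundedness of $I$ and Condition~\ref{con}), which gives $|\sqrt w| \leq Cr^{\epsilon/2}$; the $r^{-\epsilon/2}$ prefactor in \eqref{phasea} yields $|a| \leq C$, and the $i$-terms are harmless since $r \geq 1$. The lower bounds on $\mathrm{Im}\,a$ and $\mathrm{Im}\,b$ are immediate: $\mathrm{Im}\sqrt{w} \geq 0$, so the imaginary parts dominate the explicit $i$-contributions written in \eqref{phasea}-\eqref{phaseb}.

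For $|\ell^{\bullet j}\nabla_j a|$, the key point is that $a$ depends on $x$ only through the scalar quantities $r(x)$ and $q_1(x)$, since $\eta_\lambda$ and $|\nabla r|$ are functions of $r$ alone. Hence $\ell^{\bullet j}\nabla_j a = (\partial a/\partial q_1)\,\ell^{\bullet j}\nabla_j q_1$, the radial contribution vanishing because $\ell^{\bullet j}\nabla_j r = 0$ on $\{r \geq r_0\}$ (on the compact complement all terms are smooth and bounded while $r^{-\epsilon/2}f^{-1-\tau}$ is bounded below). A direct differentiation and the comparability $|\sqrt w| \geq cr^{\epsilon/2}$ give $|\partial a/\partial q_1| \leq Cr^{-\epsilon}$, which combined with the second bound of Condition~\ref{con2} in the form $|\ell^{\bullet k}\nabla_k q_1| \leq Cr^{\epsilon/2}f^{-1-\tau}$ delivers the estimate.

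The main obstacle is the eikonal-type identity for $b$. I would split $b = b_0 + b_1$ with $b_0 = \eta_\lambda|\nabla r|\sqrt w$ and $b_1 = i\tfrac{\epsilon}{4}|\nabla r|^2 r^{-1}$, so that $b_0^2 = \eta_\lambda^2 |\nabla r|^2 w$ matches the target up to a compactly supported discrepancy coming from the factor $\eta_\lambda^2$. The decisive cancellation occurs between $\nabla^r b_0$ and the cross term $2b_0 b_1$: in the region $\eta_\lambda = |\nabla r| = 1$ one has $b_0^2 = w$, and
\begin{equation*}
\tfrac{\epsilon}{2}\,r^{-1}b_0 \;-\; \frac{\epsilon\, r^{\epsilon-1}}{b_0} \;=\; \frac{\epsilon\,(z - q_1)}{r\,b_0},
\end{equation*}
which annihilates the $r^{\epsilon/2-1}$-sized contributions that would otherwise dominate. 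The surviving error is of the schematic form $(\partial_r q_1)/b_0 + (z - q_1)/(r b_0) + O(r^{-2})$, and I would bound each piece by $f^{-1-\min\{\rho,\epsilon',\tau\}}$ using $|\nabla^f q_1| \leq Cf^{-1-\tau}$ from Condition~\ref{con2}, the growth bound on $q_1$ from Condition~\ref{con}, and the asymptotic $r^{-1-\epsilon/2} \leq Cf^{-1-\epsilon'}$ for $0 < \epsilon < 2$ (with its super-logarithmic analogue when $\epsilon = 2$). The transition region where $\eta_\lambda$ or $|\nabla r|$ differ from $0$ or $1$ contributes only compactly supported corrections, which are absorbed trivially since $f$ is bounded below there.
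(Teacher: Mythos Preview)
Your proposal is correct and follows essentially the same direct-verification approach as the paper: compute each expression from the explicit formulas \eqref{phasea}--\eqref{phaseb} and bound term by term using Conditions~\ref{con} and~\ref{con2}. Your chain-rule shortcut for the fifth estimate (writing $a=a(r,q_1)$ outside a compact set so that only the $q_1$-derivative survives) and your explicit identification of the eikonal cancellation $\tfrac{\epsilon}{2}r^{-1}b_0 - \epsilon r^{\epsilon-1}/b_0 = \epsilon(z-q_1)/(rb_0)$ in the sixth estimate are slightly cleaner organizationally than the paper's full term-by-term expansions, but the underlying computation is the same.
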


\begin{proof}
By the definitions of $a$ and $b$ (see \eqref{phasea}, \eqref{phaseb}) the first, second, third and fourth estimates clearly hold.
The fifth estimate is also clear by Condition~\ref{con2} and the following equation
\begin{align*}
\ell^{\bullet j}\nabla_j a &= \ell^{\bullet j}(\nabla_j \eta_{\lambda}|\nabla r|)r^{-\epsilon/2}\sqrt{2(z-q_1+r^{\epsilon})} \\
  &\quad - \frac{\epsilon}{2}(1-\eta)(\nabla r)^{\bullet}\eta_{\lambda}|\nabla r|r^{-\epsilon/2-1}\sqrt{2(z-q_1+r^{\epsilon})} \\
  &\quad + \frac{1}{2}\ell^{\bullet j}\eta_{\lambda}|\nabla r|r^{-\epsilon/2}\left( -(\nabla_j q_1) + \epsilon(\nabla r)_j r^{\epsilon-1} \right)/\sqrt{2(z-q_1+r^{\epsilon})} \\
  &\quad + i\ell^{\bullet j}\frac{\epsilon}{2}\left(\nabla_j |\nabla r|^2\right)r^{-\epsilon/2-1} - i\frac{\epsilon}{2}\left(\frac{\epsilon}{2}+1\right)(1-\eta)(\nabla r)^{\bullet}|\nabla r|^2r^{-\epsilon/2-2}.
\end{align*}
Since we can write
\begin{align*}
  &\quad \hspace{-3mm} p^rb + b^2 - 2|\nabla r|^2(z-q_1+r^{\epsilon}) \\
  &= (p^r \eta_{\lambda}|\nabla r|)\sqrt{2(z-q_1+r^{\epsilon})} + i\frac{\sqrt2}{2}\eta_{\lambda}|\nabla r|(\nabla^r q_1)/\sqrt{2(z-q_1+r^{\epsilon})} \\
  &\quad + i\eta_{\lambda}\frac{\sqrt2}{2}\epsilon|\nabla r|^3r^{-1}(z-q_1)/\sqrt{2(z-q_1+r^{\epsilon})} \\
  &\quad - 2(1-\eta_{\lambda}^2)|\nabla r|^2(z-q_1+r^{\epsilon}) + \frac{\epsilon}{4}(\nabla^r |\nabla r|^2)r^{-1} - \left( \frac{\epsilon}{4}+\frac{\epsilon^2}{16} \right)|\nabla r|^4r^{-2},
\end{align*}
by Condition~\ref{con2} the last estimate is also holds.
\end{proof}

\begin{lem} \label{Hz}
Let $I \subseteq \mathbb R$ be any relatively compact open subset.
Then there exist a complex-valued function $q_3$ and a constant $C>0$ such that uniformly in $z \in I \cup I_+$, as a quadratic forms on $C_0^{\infty}(\mathbb R^d)$,
$$
H-z = \frac{1}{2}(B+b)\tilde\eta(B-b) + \frac{1}{2}p_j\ell^{jk}p_k + q_3 ; \quad |q_3| \leq Cf^{-1-{\rm min}\{\rho, \epsilon', \tau \}}.
$$
\end{lem}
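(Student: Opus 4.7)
The plan is to factor $H-z$ by first splitting the Laplacian via the tensor identity $\delta^{jk} = \ell^{jk} + \tilde\eta(\nabla r)^j(\nabla r)^k$ and then re-expressing the resulting radial piece in terms of $B$ and $b$, collecting every leftover multiplicative term into $q_3$. The required decay bound on $q_3$ will then be forced by the eikonal identity from Lemma~\ref{esta}.

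Concretely, I would first write, as quadratic forms on $C_0^\infty(\mathbb R^d)$,
\[
  -\tfrac{1}{2}\Delta = \tfrac{1}{2}p_j\ell^{jk}p_k + \tfrac{1}{2}(p^r)^*\tilde\eta\,p^r,
\]
where the identification $p_j\tilde\eta(\nabla r)^j(\nabla r)^k p_k = (p^r)^*\tilde\eta\,p^r$ follows from $(p^r)^* = p_k(\nabla r)^k$ and commutativity of the scalar factors $\tilde\eta, (\nabla r)^j, (\nabla r)^k$. Next, using $B = p^r - \tfrac{i}{2}(\Delta r)$ together with the commutator rule $[B,g] = -i\nabla^r g$ valid for any smooth $g$, a direct expansion gives
\[
  B\tilde\eta B = (p^r)^*\tilde\eta\,p^r - \tfrac{1}{2}\nabla^r(\tilde\eta\Delta r) - \tfrac{1}{4}\tilde\eta(\Delta r)^2
\]
and
\[
  (B+b)\tilde\eta(B-b) = B\tilde\eta B + i\nabla^r(\tilde\eta b) - \tilde\eta b^2.
\]
Combining these with $H - z = -\tfrac{1}{2}\Delta - |x|^\epsilon + q_1 + q_2 - z$ yields the desired decomposition with
\[
  q_3 = \tfrac{1}{4}\nabla^r(\tilde\eta\Delta r) + \tfrac{1}{8}\tilde\eta(\Delta r)^2 - \tfrac{i}{2}\nabla^r(\tilde\eta b) + \tfrac{1}{2}\tilde\eta b^2 - |x|^\epsilon + q_1 + q_2 - z.
\]

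It remains to check $|q_3| \leq Cf^{-1-\min\{\rho,\epsilon',\tau\}}$. On the compactly supported transition region where the cutoffs $\eta,\eta_\lambda$ vary, $f$ is bounded and each term of $q_3$ is manifestly bounded. On the bulk region $\{r \gg 1\}$ where $\eta = \eta_\lambda = 1$, $r = |x|$, and $|\nabla r| = 1$, the last estimate of Lemma~\ref{esta}, rewritten using $p^r b = -i\nabla^r b$, reads $b^2 = i\nabla^r b + 2(z - q_1 + r^\epsilon) + O(f^{-1-\min\{\rho,\epsilon',\tau\}})$. Multiplying by $\tfrac{1}{2}\tilde\eta$ and using $\tilde\eta|\nabla r|^2 = \eta = 1$, the expression $\tfrac{1}{2}\tilde\eta b^2 + q_1 - z - |x|^\epsilon$ collapses to $\tfrac{i}{2}\tilde\eta\nabla^r b$ up to the target error, and this cancels exactly against $-\tfrac{i}{2}\nabla^r(\tilde\eta b) = -\tfrac{i}{2}\tilde\eta\nabla^r b$ since $\nabla^r\tilde\eta$ vanishes on this set. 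The geometric remainders involving $\Delta r = (d-1)/r$ are $O(r^{-2})$, which via the relation $f \sim r^{1-\epsilon/2}$ (resp.\ $f \sim \log r$ when $\epsilon = 2$) is dominated by $f^{-1-\epsilon'}$ (resp.\ decays faster than any power of $f^{-1}$), and $q_2 = O(f^{-1-\rho})$ by Condition~\ref{con}.

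The hard part of the argument is this final cancellation: the leading $O(r^\epsilon)$ and $O(r^{\epsilon/2-1})$ contributions of $\tfrac{1}{2}\tilde\eta b^2$ must disappear against $-|x|^\epsilon + q_1 - z$ and against $-\tfrac{i}{2}\nabla^r(\tilde\eta b)$, respectively. This cancellation is precisely the reason the phase $b$ is defined with the form~\eqref{phaseb}, and it is exactly what Lemma~\ref{esta} certifies; once that lemma is in hand, the verification reduces to routine bookkeeping.
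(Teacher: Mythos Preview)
Your proof is correct and follows essentially the same route as the paper: split $-\tfrac12\Delta$ via $\delta=\ell+\tilde\eta(\nabla r)\otimes(\nabla r)$, expand $(B+b)\tilde\eta(B-b)$ in terms of $(p^r)^*\tilde\eta p^r$, and then verify the decay of the scalar remainder $q_3$ by invoking the eikonal estimate of Lemma~\ref{esta}. Your explicit formula for $q_3$ agrees with the paper's (after using $\tilde\eta|\nabla r|^2=\eta$ and $p^rb=-i\nabla^rb$) except for the sign in front of $\tfrac18\tilde\eta(\Delta r)^2$, where your $+$ is in fact the correct one; since this term is $O(r^{-2})=O(f^{-1-\epsilon'})$ either way, the discrepancy is harmless.
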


\begin{proof}
Using the following expression:
$$
B=p^r - \frac{i}{2}(\Delta r) = (p^r)^* + \frac{i}{2}(\Delta r),
$$
we can write
\begin{align*}
H-z &= \frac{1}{2}(B+b)\tilde\eta(B-b) + \frac{1}{2}p_j\ell^{jk}p_k - \frac{i}{2}(\nabla^r \tilde\eta)b + \frac{1}{2}\tilde\eta(p^r b) + \frac{1}{2}\tilde\eta b^2 \\
  &\quad - |x|^{\epsilon} + q_0 + q_1 + q_2 - z,
\end{align*}
where
\begin{align*}
q_0 &= \frac{1}{4}(\nabla^r \tilde\eta)(\Delta r) + \frac{1}{4}\tilde\eta(\nabla^r\Delta r) - \frac{1}{8}\tilde\eta(\Delta r)^2.
\end{align*}
Hence the assertion is obtained by setting
\begin{align*}
q_3 &= \frac{1}{2}\tilde\eta \left[ (p^r b) + b^2 - 2|\nabla r|^2(z-q_1+r^{\epsilon})\right] - (1-\eta)(z-q_1+r^{\epsilon}) \\
  &\quad - \frac{i}{2}(\nabla^r \tilde\eta)b + (r^{\epsilon}-|x|^{\epsilon}) + q_0 + q_2
\end{align*}
and using Lemma~\ref{esta}.
\end{proof}

Let us introduce the regularized weight
\begin{equation*}
\Theta= \Theta_{\nu}^{\delta}=\int_0^{f/R_{\nu}}(1+s)^{-1-\delta}ds=\left[ 1-(1+f/R_{\nu})^{-\delta} \right]/\delta; \ \delta>0, \nu\geq 0,
\end{equation*}
which is the same weight as \eqref{theta} introduced in Section 3.
We denote its derivatives in $f$ by primes such as \eqref{thetabibun}.

\begin{lem} \label{1708182102}
Let $I \subseteq \mathbb R$ be any relatively compact open subset,
and fix any $\delta \in (0, {\rm min}\{ \rho, \epsilon', \tau \}]$ and $\beta \in (0, 1+\epsilon/2)$.
Then there exist $c, C>0$ such that uniformly in $z \in I \cup I_+$ and $\nu \geq 0$, as quadratic forms on $\mathcal D(H)$
\begin{align*}
  &\quad \hspace{-3mm} {\rm Im}\left( (A-a)^*\Theta^{2\beta}(H-z) \right) \\
  &\geq c(A-a)^*\Theta'\Theta^{2\beta-1}(A-a) + cp_j\Theta^{2\beta}h^{jk}p_k \\
  &\quad - Cf^{-1-{\rm min}\{ 2\rho, 2\epsilon', 2\tau \}+2\delta}\Theta^{2\beta} - {\rm Re}\bigl( \gamma\Theta^{2\beta}(H-z) \bigr),
\end{align*}
where $\gamma$ is a certain function satisfying $|\gamma| \leq Cf^{-1-{\rm min}\{ 2\rho, 2\epsilon', 2\tau \}+2\delta}$.
\end{lem}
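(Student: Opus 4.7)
The plan is to adapt the proof of Lemma~\ref{bblem} in two ways: replace the scalar weight $\Theta$ by $\Theta^{2\beta}$, and replace the conjugate operator $A$ by $A-a$. First I would split
\[
(A-a)^*\Theta^{2\beta}(H-z) = A\Theta^{2\beta}(H-z) - \bar a\,\Theta^{2\beta}(H-z)
\]
and take imaginary parts separately. Since $\Theta^{2\beta}$ is still a smooth function of $f$ with bounded derivatives (this uses $\Theta'\le f^{-1}\Theta$ from Lemma~\ref{thetaest} together with $f\ge 1$), and $(\Theta^{2\beta})' = 2\beta\,\Theta^{2\beta-1}\Theta'$, Lemma~\ref{lem21} applies directly to $\mathrm{Im}\bigl(A\Theta^{2\beta}(H-z)\bigr)$. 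Rerunning the chain of estimates \eqref{zest}--\eqref{zest4} with this new weight yields a leading positive form $c\,p_jh^{jk}\Theta^{2\beta}p_k + c\,A\Theta'\Theta^{2\beta-1}A$, together with admissible $\mathrm{Re}\bigl(\gamma_1\Theta^{2\beta}(H-z)\bigr)$-terms and a scalar remainder of order $f^{-1-\min\{1,\rho,\epsilon'\}}\Theta^{2\beta}$.

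For the $a$-summand the plan is to invoke Lemma~\ref{Hz} to factor $H-z = \tfrac12(B+b)\tilde\eta(B-b) + \tfrac12 p_j\ell^{jk}p_k + q_3$. The key algebraic identities, obtained directly from \eqref{phasea}, \eqref{phaseb}, \eqref{A}, and $p^f = r^{-\epsilon/2}p^r$, are
\[
a - r^{-\epsilon/2}b = \tfrac{i\epsilon}{4}|\nabla r|^2 r^{-\epsilon/2-1}, \qquad A - r^{-\epsilon/2}B = \tfrac{i\epsilon}{4}r^{-\epsilon/2-1},
\]
which combine, in the region $\{\eta=1\}$ where $|\nabla r|=1$, into the exact identity $A-a = r^{-\epsilon/2}(B-b)$. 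This lets the factor $(B-b)$ sitting inside $(B+b)\tilde\eta(B-b)$ be rewritten in terms of $A-a$ and combined with the $A$-bilinear form produced by the first summand, upgrading $cA\Theta'\Theta^{2\beta-1}A$ into $c\,(A-a)^*\Theta'\Theta^{2\beta-1}(A-a)$; the strict positivity $\mathrm{Im}\,a \ge \tfrac{\epsilon}{2}|\nabla r|^2r^{-\epsilon/2-1}$ from Lemma~\ref{esta} secures the correct sign in this combination. The residual $\tfrac12 p_j\ell^{jk}p_k$ contribution is absorbed into $p_jh^{jk}\Theta^{2\beta}p_k$ via the defining bound $h\ge r^{-\epsilon}f^{-1}\ell + Cr^{-\epsilon}f^{-2-\rho}\delta$, and the $q_3$-term is folded into $\mathrm{Re}\bigl(\gamma_2\Theta^{2\beta}(H-z)\bigr)$ using $|q_3|\le Cf^{-1-\min\{\rho,\epsilon',\tau\}}$ and $|a|\le C$.

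I expect the main difficulty to be the bookkeeping of remainders: every lower-order term must land in exactly one of the three permitted slots, namely the positive lower bound, the symmetric $\mathrm{Re}\bigl(\gamma\Theta^{2\beta}(H-z)\bigr)$ remainder, or the scalar error $Cf^{-1-\min\{2\rho,2\epsilon',2\tau\}+2\delta}\Theta^{2\beta}$. The doubled exponent $2\delta$ arises because cross-terms between the ``gain factor'' $\Theta'\Theta^{2\beta-1}\gtrsim(\min\{R_\nu,f\})^\delta f^{-1-\delta}\Theta^{2\beta}$ supplied by Lemma~\ref{thetaest} and the decay factors coming from $|a|^2$, $|\nabla^f q_1|$, or $q_3$ effectively square the available loss; the hypothesis $\delta\le\min\{\rho,\epsilon',\tau\}$ is precisely what keeps the aggregate exponent strictly negative. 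The constraint $\beta < 1+\epsilon/2$ will enter when absorbing the higher-order cross-terms produced by $(\Theta^{2\beta})''$ and by the $\beta(\Delta f)\Theta^{2\beta-1}\Theta'$ factors against the principal form $p_j(\nabla^2 f)^{jk}\Theta^{2\beta}p_k$; this threshold is sharp in view of the relative sizes $|\nabla f|^2 = r^{-\epsilon}$ and $(\nabla^2 f)^{jk}\sim r^{-\epsilon/2-1}$ supplied by \eqref{rx}.
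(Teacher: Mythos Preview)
Your decomposition is not the one the paper uses, and the difference matters. The paper does \emph{not} split $(A-a)^*$ into $A$ and $-\bar a$ at the outset; instead it applies the factorisation Lemma~\ref{Hz} to $H-z$ immediately and keeps $(A-a)^*$ intact. Since $\tilde\eta(B-b)=\tilde\eta\,r^{\epsilon/2}(A-a)$ exactly, the principal term is already a sandwich
\[
\tfrac12\,\mathrm{Im}\bigl((A-a)^*\,\Theta^{2\beta}(B+b)\tilde\eta\,r^{\epsilon/2}\,(A-a)\bigr),
\]
and taking the imaginary part of the \emph{middle} operator (commutator of $\Theta^{2\beta}$ with $B$, together with $\mathrm{Im}\,b$) produces the positive form $c\,(A-a)^*\Theta'\Theta^{2\beta-1}(A-a)$ directly, without any ``upgrading''. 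Only the $\tfrac12 p_j\ell^{jk}p_k$ piece is then split as $A-\bar a$: the $A$-part is handled by the short commutator identity of Lemma~\ref{1708211321} (not the full Lemma~\ref{lem21}), and the $-\bar a$ part contributes the crucial $(\epsilon/4)\,r^{-\epsilon/2-1}\ell$ gain via $\mathrm{Im}\,a$.

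In your plan the ``upgrading'' step is the real gap. From Part~1 you obtain $c\,A\Theta'\Theta^{2\beta-1}A$, and the difference $(A-a)^*\Theta'\Theta^{2\beta-1}(A-a)-A\Theta'\Theta^{2\beta-1}A$ contains the scalar $|a|^2\Theta'\Theta^{2\beta-1}$, which is only $O(f^{-1}\Theta^{2\beta})$ and therefore does \emph{not} fit into the allowed error $Cf^{-1-\min\{2\rho,2\epsilon',2\tau\}+2\delta}\Theta^{2\beta}$ for small $\delta$. You assert that the $(B+b)\tilde\eta(B-b)$ piece of Part~2 supplies exactly these terms, but that piece has the asymmetric form $\bar a\cdot(\text{operator})\cdot(A-a)$, and you do not show that its imaginary part reproduces the required cross-terms with the right weights; this algebra is nontrivial and is precisely what the paper avoids by never separating $(A-a)^*$ from the radial factor. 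Relatedly, your account of where $\beta<1+\epsilon/2$ enters is off: in the paper it appears at \eqref{estHz5}--\eqref{estHz6}, where the negative $-\beta r^{-\epsilon}\Theta'\ell$ coming from $\nabla^f\Theta^{2\beta}$ in the $\ell$-commutator must be beaten by $h\Theta$ together with the $(\epsilon/2)r^{-\epsilon/2-1}\ell\Theta$ boost from $\mathrm{Im}\,a$, yielding the threshold $1+\epsilon/2$; it has nothing to do with $(\Theta^{2\beta})''$ or $(\Delta f)$-factors.
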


\begin{proof}
Let $I, \delta$ and $\beta$ be as in the assertion.
To prove the asserted inequality it suffices to compute as a quadratic forms on $C_0^{\infty}(\mathbb R^d)$.
By Lemma~\ref{thetaest}, \ref{Hz} and \ref{1708182102} and the Cauchy-Schwarz inequality it follows that uniformly in $z \in I \cup I_+$ and $\nu \geq 0$
\begin{equation}
\begin{split} \label{estHz}
  &\quad \hspace{-3mm} {\rm Im}\left( (A-a)^*\Theta^{2\beta}(H-z) \right) \\
  &= \frac{1}{2}{\rm Im}\left( (A-a)^*\Theta^{2\beta}(B+b)\tilde\eta r^{\epsilon/2}(A-a) \right) + \frac{1}{2}{\rm Im}\left( A\Theta^{2\beta}p_j\ell^{jk}p_k \right) \\
  &\quad - \frac{1}{2}{\rm Im}\left( a^*\Theta^{2\beta}p_j\ell^{jk}p_k \right) + {\rm Im}\left( (A-a)^*\Theta^{2\beta}q_3 \right) \\
  &= \frac{1}{2}(A-a)^*\beta\Theta'\Theta^{2\beta-1}\eta(A-a) - \frac{1}{4}(A-a)^*\Theta^{2\beta}(\nabla^r \tilde\eta)r^{\epsilon/2}(A-a) \\
  &\quad - \frac{\epsilon}{8}(A-a)^*\Theta^{2\beta}\eta r^{\epsilon/2-1}(A-a) + \frac{1}{2}(A-a)^*\Theta^{2\beta}\left( {\rm Im}\,b \right)\tilde\eta r^{\epsilon/2}(A-a) \\
  &\quad + \frac{1}{4}\left[ p_j\Theta^{2\beta}\ell^{jk}p_k, iA \right] + \frac{1}{2}{\rm Re}\left( A(1-\eta)\left(\Theta^{2\beta}\right)'p^f \right) \\
  &\quad - \frac{1}{2}{\rm Im}\left( a^*\Theta^{2\beta}p_j\ell^{jk}p_k \right) + {\rm Im}\left( (A-a)^*\Theta^{2\beta}q_3 \right) \\
  &\geq \frac{1}{2}(A-a)^*\left( \beta\Theta' - f^{-1-2\delta}\Theta \right)\Theta^{2\beta-1}(A-a) + \frac{1}{4}\left[ p_j\Theta^{2\beta}\ell^{jk}p_k, iA \right] \\
  &\quad - \frac{1}{2}{\rm Im}\left( a^*\Theta^{2\beta}p_j\ell^{jk}p_k \right) - C_1Q,
\end{split}
\end{equation}
where
$$
Q = f^{-1-{\rm min}\{ 2\rho, 2\epsilon', 2\tau\}+2\delta}\Theta^{2\beta} + p_jr^{-\epsilon}f^{-1-{\rm min}\{ 2\rho, 2\epsilon', 2\tau\}+2\delta}\Theta^{2\beta}\delta^{jk}p_k.
$$

Let us further estimate the terms on the right-hand side of \eqref{estHz}.
By Lemma~\ref{thetaest} the first term of \eqref{estHz} can be bounded as
\begin{equation} \label{estHz2}
\begin{split}
  &\quad \hspace{-3mm} \frac{1}{2}(A-a)^*\left( \beta\Theta' - f^{-1-2\delta}\Theta \right)\Theta^{2\beta-1}(A-a) \\
  &\geq c_1(A-a)^*\Theta'\Theta^{2\beta-1}(A-a) - C_2Q.
\end{split}
\end{equation}
To estimate the second term of \eqref{estHz} we use the following lemma used also in \cite{is}.
\begin{lem} \label{1708211321}
Let $\tilde g^{ij}=\tilde g^{ji} \in C^{\infty}(\mathbb R^d)$ for $i, j = 1, 2, \ldots d$.
Then, as a quadratic form on $C_0^{\infty}(\mathbb R^d)$,
\begin{align*}
\left[ p_i\tilde g^{ij}p_j, iA \right] &= p_i\left\{ \tilde g^{ij}\left(\nabla^2 f\right)_j{}^{k}+\left(\nabla^2 f\right)_j{}^i\tilde g^{kj}-\bigl(\nabla^f \tilde g\bigr)^{ik} \right\}p_k \\
  &\quad - {\rm Im}\left( \tilde g^{jk}(\nabla_k\Delta f)p_j \right).
\end{align*}
\end{lem}

\noindent
We apply Lemma~\ref{1708211321} with $\tilde g = \Theta^{2\beta}\ell$ to the second term of \eqref{estHz}.
Then we can estimate as follows.
\begin{equation} \label{estHz3}
\begin{split}
  &\quad \hspace{-3mm} \frac{1}{4}\left[ p_j\Theta^{2\beta}\ell^{jk}p_k, iA \right] \\
  &= \frac{1}{4}p_i\left\{ \Theta^{2\beta}\ell^{ij}\left(\nabla^2 f\right)_j{}^{k}+\left(\nabla^2 f\right)_j{}^i\Theta^{2\beta}\ell^{kj}-\bigl(\nabla^f \Theta^{2\beta}\ell\bigr)^{ik} \right\}p_k \\
  &\quad - \frac{1}{4}{\rm Im}\left( \Theta^{2\beta} \ell^{jk}(\nabla_k\Delta f)p_j \right) \\
  &\geq \frac{1}{2}p_j\left\{ h^{jk}\Theta - \beta r^{-\epsilon}\Theta'\ell^{jk} \right\}\Theta^{2\beta-1}p_k  - C_2Q.
\end{split}
\end{equation}
As for the third term of \eqref{estHz} using Lemma~\ref{esta} and the Cauchy-Schwarz inequality we can estimate as, for any $\epsilon_1 \in (0, 1)$,
\begin{equation} \label{estHz4}
\begin{split}
  &\quad \hspace{-3mm} - \frac{1}{2}{\rm Im}\left( a^*\Theta^{2\beta}p_j\ell^{jk}p_k \right) \\
  &= - \frac{1}{2}{\rm Im}\left( p_ja^*\Theta^{2\beta}\ell^{jk}p_k + i(\nabla_j a)^*\Theta^{2\beta}\ell^{jk}p_k +i2\beta(1-\eta)a^*\Theta^{2\beta-1}\Theta'p^f \right) \\
  &\geq \left( \frac{\epsilon}{4}-\epsilon_1 \right)p_jr^{-\epsilon/2-1}\Theta^{2\beta}\ell^{jk}p_k - \epsilon_1^{-1}C_3Q.
\end{split}
\end{equation}
By the bounds \eqref{estHz}, \eqref{estHz2}, \eqref{estHz3} and \eqref{estHz4} we obtain
\begin{equation} \label{estHz5}
\begin{split}
  &\quad \hspace{-3mm} {\rm Im}\left( (A-a)^*\Theta^{2\beta}(H-z) \right) \\
  &\geq c_1(A-a)^*\Theta'\Theta^{2\beta-1}(A-a) - \epsilon_1^{-1}C_4Q \\
  &\quad + \frac{1}{2}p_j\left\{ h^{jk}\Theta + \left( \frac{\epsilon}{2}-2\epsilon_1 \right)r^{-\epsilon/2-1}\ell^{jk}\Theta - \beta r^{-\epsilon/2}\Theta'\ell^{jk} \right\}\Theta^{2\beta-1}p_k.
\end{split}
\end{equation}
If we choose $\epsilon_1>0$ small enough, we have the following inequality
\begin{equation} \label{estHz6}
\begin{split}
  &\quad \hspace{-3mm} \frac{1}{2}p_j\left\{ h^{jk}\Theta + \left( \frac{\epsilon}{2}-2\epsilon_1 \right)r^{-\epsilon/2-1}\ell^{jk}\Theta - \beta r^{-\epsilon/2}\Theta'\ell^{jk} \right\}\Theta^{2\beta-1}p_k \\
  &\geq c_2p_j\Theta^{2\beta}h^{jk}p_k.
\end{split}
\end{equation}
Finally we can bound $-Q$ as
\begin{equation} \label{estHz7}
\begin{split}
-Q &\geq - C_5f^{-1-{\rm min}\{ 2\rho, 2\epsilon', 2\tau\}+2\delta}\Theta^{2\beta} \\
  &\quad - 2{\rm Re}\left( r^{-\epsilon}f^{-1-{\rm min}\{ 2\rho, 2\epsilon', 2\tau\}+2\delta}\Theta^{2\beta}(H-z) \right).
\end{split}
\end{equation}
By \eqref{estHz5}, \eqref{estHz6} and \eqref{estHz7}, if we set
$$
\gamma = 2\epsilon_1^{-1}C_4r^{-\epsilon}f^{-1-{\rm min}\{ 2\rho, 2\epsilon', 2\tau\}+2\delta},
$$
then the assertion follows.
\end{proof}

\subsection{Applications}

\begin{proof}[Proof of Theorem~\ref{rcb}]
Let $I \subset \mathbb R$ be any relative compact open subset.
For $\beta = 0$ the assertion is obvious by Theorem~\ref{bb}, and hence we may let $\beta \in (0, \beta_c)$.
We take any
$$
\delta \in (0, {\rm min}\{ \rho, \epsilon', \tau \}-\beta).
$$
By Lemma~\ref{1708182102}, the Cauchy-Schwarz inequality and the Theorem~\ref{bb} there exists $C_1>0$ such that for any state $\phi = R(z)\psi$ with $\psi \in r^{-\beta}\mathcal B$ and $z \in I_+$
\begin{equation} \label{rcbest}
\begin{split}
  &\quad \hspace{-3mm} \| \Theta'^{1/2}\Theta^{\beta-1/2}(A-a)\phi \|^2 + \langle p_j \Theta^{2\beta}h^{jk}p_k \rangle_{\phi} \\
  &\leq C_1\left[ \| \Theta^{\beta}(A-a)\phi \|_{\mathcal B^*}\| \Theta^{\beta}\psi \|_{\mathcal B} + \| f^{-1/2-{\rm min}\{\rho, \epsilon', \tau\}+\delta}\Theta^{\beta}\phi \|^2\right. \\
  &\quad \left. + \| f^{1/2-{\rm min}\{\rho, \epsilon', \tau\}+\delta}\Theta^{\beta}\psi \|^2 \right] \\
  &\leq C_2R_{\nu}^{-2\beta}\left[ \| f^{\beta}(A-a)\phi \|_{\mathcal B^*}\| f^{\beta}\psi \|_{\mathcal B} + \| f^{\beta}\psi \|_{\mathcal B}^2 \right].
\end{split}
\end{equation}
Here we note that $f^{\beta}(A-a)\phi \in \mathcal B^*$ for each $z \in I_+$ and hence the quantity on the right-hand side of \eqref{rcbest} is finite.
In fact, this can be verified by commuting $R(z)$ and powers of $f$ sufficiently many times and using the fact that $\psi \in f^{-\beta}\mathcal B$.
Then by \eqref{rcbest} it follows
\begin{equation} \label{rcbest2}
\begin{split}
  &\quad \hspace{-3mm} R_{\nu}^{2\beta}\| \Theta'^{1/2}\Theta^{\beta-1/2}(A-a)\phi \|^2 + R_{\nu}^{2\beta}\langle p_j \Theta^{2\beta}h^{jk}p_k \rangle_{\phi} \\
  &\leq C_2\left[ \| f^{\beta}(A-a)\phi \|_{\mathcal B^*}\| f^{\beta}\psi \|_{\mathcal B} + \| f^{\beta}\psi \|_{\mathcal B}^2 \right].
\end{split}
\end{equation}
In the first term on the right-hand side of \eqref{rcbest2} we take the supremum in $\nu \geq 0$ noting \eqref{thetabibun}, and then obtain
$$
c_1\| f^{\beta}(A-a)\phi \|_{\mathcal B^*}^2 \leq C_2\left[ \| f^{\beta}(A-a)\phi \|_{\mathcal B^*}\| f^{\beta}\psi \|_{\mathcal B} + \| f^{\beta}\psi \|_{\mathcal B}^2 \right],
$$
which implies
\begin{equation} \label{rcbest3}
\| f^{\beta}(A-a)\phi \|_{\mathcal B^*}^2 \leq C_3\| f^{\beta}\psi \|_{\mathcal B}^2.
\end{equation}
As for the second term on the right-hand side of \eqref{rcbest2} we use \eqref{rcbest3}, the concavity of $\Theta$ and Lebesgue's monotone convergence theorem, and then obtain by letting $\nu \to \infty$
$$
\langle p_j f^{2\beta}h^{jk}p_k \rangle_{\phi} \leq C_4\| f^{\beta}\psi \|_{\mathcal B}^2.
$$
Hence we are done.
\end{proof}

\begin{proof}[Proof of Corollary~\ref{lap}]
Let $s>1/2$ be as in the assertion.
Throughout the proof let us fix any $\beta \in (0, {\rm min}\{\beta_c, s-1/2 \})$ and $s' \in (s-\beta, s)$.
We decompose for $m \geq 0$ and $z, z' \in I_+$
\begin{equation} \label{lap1}
\begin{split}
R(z) - R(z') &= \chi_mR(z)\chi_m - \chi_mR(z')\chi_m \\
  &\quad + \bigl( R(z) - \chi_mR(z)\chi_m \bigr) - \left( R(z') - \chi_mR(z')\chi_m \right).
\end{split}
\end{equation}
By Theorem~\ref{bb} we can estimate the third term of \eqref{lap1} uniformly in $m \geq 0$ and $z, z' \in I_+$ as
\begin{equation} \label{lap2}
\begin{split}
  &\quad \hspace{-3mm} \| R(z) - \chi_mR(z)\chi_m \|_{\mathcal B(\mathcal H_s, \mathcal H_{-s})} \\
  &\leq \| f^{-s}\bar\chi_mR(z)\bar\chi_mf^{-s} \|_{\mathcal B(\mathcal H)} + \| f^{-s}\bar\chi_mR(z)\chi_mf^{-s} \|_{\mathcal B(\mathcal H)} \\
  &\quad + \| f^{-s}\chi_mR(z)\bar\chi_mf^{-s} \|_{\mathcal B(\mathcal H)} \\
  &\leq C_1R_m^{s'-s}.
\end{split}
\end{equation}
Similarly, we obtain
\begin{equation} \label{lap3}
\| R(z') - \chi_mR(z')\chi_m \|_{\mathcal B(\mathcal H_s, \mathcal H_{-s})} \leq C_2R_m^{s'-s}.
\end{equation}
As for the first and second terms on the right-hand side of \eqref{lap1}, using the equation
\begin{equation} \label{1709251357}
i[H, \chi_n] = {\rm Re}\left(\chi_n'p^f\right) = {\rm Re}\left(\chi_n'A\right)
\end{equation}
and noting the identify $\overline{a_{\bar z}}=a_z$, we can write for $n>m$
\begin{align*}
  &\quad \hspace{-3mm} \chi_mR(z)\chi_m - \chi_mR(z')\chi_m \\
  &= \chi_mR(z)\bigl\{ \chi_n(H-z') - (H-z)\chi_n \bigr\}R(z')\chi_m \\
  &= \chi_mR(z)\left\{ (z-z')\chi_n + i{\rm Re}\left(\chi_n'A\right) \right\}R(z')\chi_m \\
  &= \chi_mR(z)\left\{ (z-z')\chi_n - \frac{i}{2}(a_z - a_{z'})\chi_n' \right\}R(z')\chi_m \\
  &\quad + \frac{i}{2}\chi_mR(z)\chi_n'(A-a_{z'})R(z')\chi_m + \frac{i}{2}\chi_mR(z)(A+a_{\bar z})^*\chi_n'R(z')\chi_m.
\end{align*}
Then by Theorem~\ref{bb} and Theorem~\ref{rcb} we have uniformly in $n>m \geq 0$ and $z, z' \in I_+$
\begin{equation} \label{lap4}
\| \chi_mR(z)\chi_m - \chi_mR(z')\chi_m \|_{\mathcal B(\mathcal H_s, \mathcal H_{-s})} \leq C_3R_n|z-z'| + C_4R_m^{s'-s}.
\end{equation}
By \eqref{lap1}, \eqref{lap2}, \eqref{lap3} and \eqref{lap4}, we obtain uniformly in $n>m \geq 0$ and $z, z' \in I_+$
$$
\| R(z) - R(z') \|_{\mathcal B(\mathcal H_s, \mathcal H_{-s})} \leq C_5R_m^{s'-s} + C_3R_n|z-z'|.
$$
Now we choose $n = m+1$ and $R_m \leq |z-z'|^{-1/(s-s'+1)} \leq R_n$, and then obtain uniformly in $z, z' \in I_+$
\begin{equation} \label{lap5}
\| R(z) - R(z') \|_{\mathcal B(\mathcal H_s, \mathcal H_{-s})} \leq C_6|z-z'|^{\omega}
\end{equation}
with $\omega = (s-s')/(s-s'+1)$.
The H\"older continuity \eqref{Hc} for $R(z)$ follows from \eqref{lap5}.
The H\"older continuity \eqref{Hc} for $r^{-\epsilon/2}pR(z)$ follows by using \eqref{formula}.

The existence of the limits of \eqref{Rlim} follows immediately from \eqref{Hc}.
By Theorem~\ref{bb} the limits $R(\lambda \pm i0)$ and $r^{-\epsilon/2}pR(\lambda \pm i0)$ actually map into $\mathcal B^*$, and moreover they extend continuously to maps $\mathcal B \to \mathcal B^*$ by a density argument. 
Hence we are done.
\end{proof}

\begin{proof}[Proof of Corollary~\ref{rcb2}]
Note the elementary property
$$
\| \psi \|_{\mathcal B^*} = \sup_{n \geq 0}\| \chi_n\psi \|_{\mathcal B^*}; \quad \psi \in \mathcal B^*.
$$
Let $\beta \in [0, \beta_c)$ be as in the assertion.
By Theorem~\ref{rcb} there exists $C>0$ such that for any $\Gamma>0$ and $n > 0$
$$
\| \chi_nf^{\beta}(A-a)R(\lambda + i\Gamma)\psi \|_{\mathcal B^*} \leq C\| f^{\beta}\psi \|_{\mathcal B}, \quad \psi \in C_0^{\infty}(\mathbb R).
$$
By taking the limit $\Gamma \to 0$ and using Corollary~\ref{lap} and a density argument, we obtain
$$
\| \chi_nf^{\beta}(A-a)R(\lambda + i0)\psi \|_{\mathcal B^*} \leq C\| f^{\beta}\psi \|_{\mathcal B}, \quad \psi \in f^{-\beta}\mathcal B.
$$
Finally, by the Lebesgue's monotone convergence theorem we obtain
$$
\| f^{\beta}(A-a)R(\lambda + i0)\psi \|_{\mathcal B^*} \leq C\| f^{\beta}\psi \|_{\mathcal B}, \quad \psi \in f^{-\beta}\mathcal B.
$$
Similarly, we obtain
$$
\langle p_jf^{2\beta}h^{jk}p_k \rangle^{1/2}_{R(\lambda +i0)\psi} \leq C\| f^{\beta}\psi \|_{\mathcal B}.
$$
Hence we are done.
\end{proof}

\begin{proof}[Proof of Corollary~\ref{Sur}]
Let $\lambda \in \mathbb R$, $\phi \in \mathcal H_{\rm loc}$ and $\psi \in f^{-\beta}\mathcal B$ with $\beta \in [0, \beta_c)$.
We first assume $\phi = R(\lambda + i0)\psi$.
Then (i) and (ii) of the corollary hold by Corollaries~\ref{lap} and \ref{rcb2}.
Conversely, assume (i) and (ii) of the corollary, and let
$$
\phi' = \phi - R(\lambda + i0)\psi.
$$
Then by Corollaries~\ref{lap} and \ref{rcb2} it follows that $\phi'$ satisfies (i) and (ii) of the corollary with $\psi=0$.
In addition, we can verify $\phi' \in \mathcal B_0^*$ by the virial-type argument.
In fact noting the identity
$$
2{\rm Im}\bigl( \chi_{\nu}(H-\lambda) \bigr) = ({\rm Re}\,a)\chi_{\nu}' + {\rm Re}\bigl( \chi_{\nu}'(A-a) \bigr),
$$
cf. \eqref{A} and \eqref{1709251357}, we conclude that
\begin{equation} \label{Surineq}
0 \leq \langle ({\rm Re}\,a)\bar\chi_{\nu}' \rangle_{\phi'} = {\rm Re}\langle \chi_{\nu}'
(A-a) \rangle_{\phi'}.
\end{equation}
Taking the limit $\nu \to \infty$ and using $\phi' \in f^{\beta}\mathcal B^*$ and $(A-a)\phi' \in f^{-\beta}\mathcal B_0^*$ in \eqref{Surineq}, we obtain $\phi' \in \mathcal B_0^*$.
By Theorem~\ref{rell} it follows that $\phi'=0$.
Hence we have $\phi = R(\lambda + i0)\psi$.
\end{proof}


\appendix
\section{Rellich's theorem}\label{appen}

In the proofs of Theorem~\ref{bb} and Corollary~\ref{Sur} the absence of $\mathcal B_0^*$-eigenfunctions for $H$ plays a major role.
This result was studied in \cite{i}.
However, the space $\mathcal B_0^*$ employed in \cite{i} is somewhat different from the one introduced in this paper.
Due to this, we actually need a slightly relaxed version of Rellich's theorem.

For comparison we set
\begin{align*}
\mathcal B_r^* &= \{ \psi \in \mathcal H_{\rm loc} \ | \ \|\psi\|_r < \infty \}, \quad \ \|\psi\|_r = \sup_{\nu \geq 0} R_{\nu}^{\epsilon/4-1/2}\| F(R_{\nu}\leq r \leq R_{\nu+1})\psi\|_{\mathcal H},
\\
\mathcal B_{r,0}^* &= \{\psi\in \mathcal B_r^*\ |\ \lim_{\nu \to \infty} R_{\nu}^{\epsilon/4-1/2}\| F(R_{\nu}\leq r \leq R_{\nu+1})\psi\|_{\mathcal H} = 0\}, \\
\mathcal B_f^* &= \{ \psi \in \mathcal H_{\rm loc} \ | \ \|\psi\|_f < \infty \}, \quad \ \|\psi\|_f = \sup_{\nu \geq 0} R_{\nu}^{-1/2}\| F(R_{\nu}\leq f \leq R_{\nu+1})\psi\|_{\mathcal H},
\\
\mathcal B_{f,0}^* &= \{\psi\in \mathcal B_f^*\ |\ \lim_{\nu \to \infty} R_{\nu}^{-1/2}\| F(R_{\nu}\leq f \leq R_{\nu+1})\psi\|_{\mathcal H} = 0\}.
\end{align*}
As we can see with ease, for $0<\epsilon<2$ the spaces $\mathcal B_r^*$ and $\mathcal B_f^*$ are the same, and for $\epsilon=2$ the following inclusion relations hold:
\begin{equation} \label{r-f}
\mathcal B_r^* \subsetneq \mathcal B_f^*, \quad \mathcal B_{r,0}^* \subsetneq \mathcal B_{f,0}^*.
\end{equation}
In \cite{i} we constructed a $\mathcal B_r^*$-eigenfunction.
Hence by \eqref{r-f} a $\mathcal B_f^*$-eigenfunction certainly exists.
Although the absence of $\mathcal B_{r,0}^*$-eigenfunctions was proved in \cite{i}, 
we use in this paper the absence of $\mathcal B_{f,0}^*$-eigenfunctions as follows.
\begin{thm}\label{rell}
Suppose Condition~\ref{con}, and let $\lambda \in {\mathbb R}$.
If a function $\phi \in \mathcal B_{f,0}^*$ satisfies that
$$
(H-\lambda)\phi =0,
$$
in the distributional sense, then $\phi =0$ in ${\mathbb R}^d$.
\end{thm}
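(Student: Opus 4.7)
The plan is to adapt the positive commutator argument from \cite{i}, used there to rule out $\mathcal{B}_{r,0}^*$-eigenfunctions, to the slightly larger class $\mathcal{B}_{f,0}^*$. For $0<\epsilon<2$ the two spaces coincide under the natural identification of weights, since a $2$-adic decomposition in $f$ corresponds to one in $r^{1-\epsilon/2}$ and the exponents $\epsilon/4-1/2$ and $-1/2$ match after this substitution. The crux is therefore the case $\epsilon=2$, where $\mathcal{B}_{r,0}^*\subsetneq \mathcal{B}_{f,0}^*$ by \eqref{r-f} and a genuine strengthening is required.

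To make sense of the commutator manipulations on $\phi$ itself, note that $\phi\in\mathcal{H}_{\mathrm{loc}}$ together with $(H-\lambda)\phi=0$ gives $\phi\in H^2_{\mathrm{loc}}$ by elliptic regularity, so each truncation $\chi_n\phi$ lies in $\mathcal{D}(H)\cap\mathcal{D}(A)$ by \eqref{embed}. With the regularized weight $\Theta=\Theta_\nu^\delta$ of \eqref{theta}, the eigenfunction equation yields the formal identity
\begin{equation*}
0 \,=\, 2\,\mathrm{Im}\,\langle \phi, A\Theta (H-\lambda)\phi\rangle,
\end{equation*}
and feeding in Lemma~\ref{bblem} produces
\begin{equation*}
0 \,\geq\, c\|\Theta'^{1/2}\phi\|^2 + c\|\Theta'^{1/2}A\phi\|^2 + c\langle p_j h^{jk}\Theta p_k\rangle_\phi - C\|\chi_n \Theta^{1/2}\phi\|^2,
\end{equation*}
once the correction $\mathrm{Re}(\gamma(H-\lambda))$ is discarded via $(H-\lambda)\phi=0$. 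Letting $\nu\to\infty$ with $\delta\in(0,\min\{1,\rho,\epsilon'\})$ fixed and using $\phi\in\mathcal{B}_{f,0}^*$ to push the remainder term below $\|F(f\geq R_n)\phi\|_{\mathcal{B}_f^*}^2\to 0$, one obtains a Hardy-type improvement of the decay of $\phi$.

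Iterating this improvement with weighted variants (in the spirit of Lemma~\ref{1708182102}, with $f^{2\beta}\Theta$ in place of $\Theta$) propagates the gain up every polynomial power of $f$. For $0<\epsilon<2$ this already forces $\phi\in\mathcal{B}_{r,0}^*$ and then the Rellich result of \cite{i} gives $\phi=0$. For $\epsilon=2$, the polynomial-in-$f$ decay must be combined with interior elliptic estimates on $(H-\lambda)\phi=0$, applied on each $r$-dyadic annulus separately, to upgrade averaged $f$-decay to uniform $r$-decay and so place $\phi\in\mathcal{B}_{r,0}^*$, at which point \cite{i} again closes the argument.

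The main obstacle is precisely this last upgrade for $\epsilon=2$. An $f$-annulus $\{R_\nu\le f\le R_{\nu+1}\}$ has $r$-width of order $e^{R_\nu}$, so a single bound on $\|F(R_\nu\le f\le R_{\nu+1})\phi\|_{\mathcal{H}}$ represents only averaged control across many $r$-scales on which the potential $-r^2$ varies by enormous factors. Translating averaged $f$-control into uniform $r$-control requires that the commutator iteration not leak in the weight as $\delta$ is tuned, and that the interior elliptic constants on $r$-dyadic annuli not blow up despite the rapid variation of the coefficient. The bookkeeping of weights, cutoffs, and elliptic constants is where most of the technical work will be concentrated.
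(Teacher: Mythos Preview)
Your approach differs from the paper's in a way that creates an unnecessary obstacle. The paper does not attempt to reduce the $\mathcal B_{f,0}^*$ statement to the already-proved $\mathcal B_{r,0}^*$ result of \cite{i}. Instead it reruns the entire commutator argument of \cite{i} with the cutoff functions $\chi_{m,n}(r)$ replaced by $\chi_{m,n}(f)$ throughout. Since the whole Rellich machinery (the iterated weighted commutator estimates leading to super-polynomial decay, followed by the unique continuation step) is carried out in the $f$-variable from the outset, no conversion between $f$-scales and $r$-scales is ever needed. Your ``main obstacle'' simply does not arise.

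By contrast, your plan for $\epsilon=2$ is to bootstrap $f$-decay and then translate it into $r$-decay in order to land in $\mathcal B_{r,0}^*$ and invoke \cite{i} as a black box. You yourself identify the translation step as the crux and leave it unresolved; the interior elliptic estimates you propose would have to be uniform across $r$-dyadic annuli on which the potential $-r^2$ varies by exponentially large factors, and you give no mechanism for this. Moreover, your sketch of the first bootstrap step is not correct as written: in Lemma~\ref{bblem} the weight $\Theta=\Theta_\nu^\delta$ satisfies $\Theta'\to 0$ and $\Theta\le f/R_\nu\to 0$ pointwise as $\nu\to\infty$, so both the positive terms and the remainder $\|\chi_n\Theta^{1/2}\phi\|^2$ vanish in the limit, yielding only $0\ge 0$. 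The genuine Rellich iteration requires a different family of weights (the $\chi_{m,n}$-localized ones from \cite{i}, now taken as functions of $f$), not the $\Theta_\nu^\delta$ used for the Besov bound.

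In short: rather than reducing to \cite{i}, rerun its proof with $f$ in place of $r$ in the cutoffs. That is the paper's route, and it bypasses the scale-conversion problem entirely.
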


We note that for $\epsilon=2$ we impose weaker assumption than \cite{i}, and by the second inclusion relation of \eqref{r-f} Theorem~\ref{rell} is stronger than \cite[Theorem~1.2]{i}.
As with \cite{i} we can prove Theorem~\ref{rell} using a commutator argument with some weight inside.
In the proof we need modification of weight function $\Theta$.
In fact, we need to replace the cut-off function $\chi_{m, n}(r)$ as $\chi_{m, n}(f)$.
Since the other points of the proof are almost the same, we omit the details.



\begin{thebibliography}{MMjiaa}
  \bibitem[A]{a} Agmon, S.: Lower bounds for solutions of Schr\"{o}dinger equations. J. Analyse. Math. 23 (1970), 1-25.
  \bibitem[BCHM]{bchm} Bony, J. F., Carles, R., H\"afner, D., Michel, L.: Scattering theory for the Schr\"odinger equation with repulsive potential. J. Math. Pures Appl. 84 (2005) 509-579.
  \bibitem[FI]{fi} Finster, F., Isidro, J. M.: $L^p$-spectrum of the Schr\"odinger operator with inverted harmonic oscillator potential. J. Math. Phys. 58 (2017), no. 9, 092104, 9 pp.
  \bibitem[FH]{fh} Froese, R., Herbst, I.: Exponential bounds and absence of positive eigenvalues for $N$-body Schr\"{o}dinger operators. Comm. Math. Phys. 87 (1982/83), no. 3, 429-447.
  \bibitem[FHH2O]{fhh2o} Froese, R., Herbst, I., Hoffmann-Ostenhof, M., Hoffman-Ostenhof, T.: On the absence of positive eigenvalues for one-body Schr\"{o}dinger operators. J. Analyse Math. 41 (1982), 272-284.
  \bibitem[H\"{o}]{ho} H\"{o}rmander, L.: The Analysis of Linear Partial Differential Operators, vol. I\hspace{-1pt}I, Grundlehren der Mathematischen Wissenschaften, Springer-Verlag, Berlin, 1985.
  \bibitem[I]{i} Itakura, K.: Rellich's theorem for spherically symmetric repulsive Hamiltonians. preprint, 2017.
  \bibitem[IJ]{ij} Inoescu, A. D., Jerison, D.: On the absence of positive eigenvalues of Schr\"{o}dinger operators with rough potentials. Geom. Funct. Anal. 13 (2003), no. 5, 1029-1081.
  \bibitem[IS]{is} Ito, K., Skibsted, E.: Stationary scattering theory on manifolds, I. Preprint, 2016.
  \bibitem[Ishi]{ishi} Ishida, A.: On inverse scattering problem for the Schr\"{o}dinger equation with repulsive potentials. J. Math. Phys. 55 (2014), no. 8, 082101, 12 pp.
  \bibitem[Ishi2]{ishi2} Ishida, A.: The borderline of the short-range condition for the repulsive Hamiltonian. J. Math. Anal. Appl. 438 (2016), no. 1, 267-273.
  \bibitem[Iso]{iso} Isozaki, H.: A uniqueness theorem for the $N$-body Schr\"{o}dinger equation and its applications. Spectral and scattering theory (Sanda, 1992), 63-84, Lecture Notes in Pure and Appl. Math., 161, Dekkaer, New York, 1994.
  \bibitem[J]{j} Josef, M.: On the essential spectrum of two-dimensional Pauli operators with repulsive potentials. Ann. Henri Poincar\'e 17 (2016), no. 3, 733-755.
  \bibitem[L1]{l1} Lakaev, Sh. S.: Asymptotics of the eigenvalue of the discrete Schr\"odinger operator on a two-dimensional lattice with repulsive potential. Uzbek. Mat. Zh. 2015, no. 3, 54-64.
  \bibitem[L2]{l2} Lakaev, Sh. S.: Bound states of a two-particle Hamiltonian with a repulsive potential. Uzbek. Mat Zh. 2015, no. 2, 52-62.
  \bibitem[RS]{rs} Reed, M., Simon, B.: Methods of modern mathematical physics I\hspace{-1pt}I and I\hspace{-1pt}V, New York: Academic Press 1975 and 1978
  \bibitem[Ski]{ski} Skibsted, E.: Sommerfeld radiation condition at threshold. Comm. Partial Differential Equations. 38 (2013), 1601-1625.
  \bibitem[Sig]{sig} Sigal, I. M.: Stark Effect in Multielectron Systems: Non-Existence of Bound States. Commun. Math. Phys. 122 (1989), 1-22. 
\end{thebibliography}
\end{document}